\documentclass[journal]{IEEEtran}
\usepackage{amsfonts}
\usepackage{multirow,array,enumerate,color,graphicx,fancybox,pifont,epsf,epsfig,subfigure,amsmath,amssymb,psfrag}
\usepackage{algorithm}
\usepackage{algorithmic}
\usepackage{cite}
\newcounter{MYtempeqncnt}
\newtheorem{prob}{\textbf{Problem}}

\newtheorem{theorem}{\textbf{Theorem}}

\newtheorem{assumption}[theorem]{\textbf{Assumption}}

\newcommand{\secref}[1]{Section~\ref{#1}}

\newcommand{\figref}[1]{Figure~\ref{#1}}
\newcommand{\assref}[1]{Assumption~\ref{#1}}

\newcommand{\theoref}[1]{Theorem~\ref{#1}}

\newcommand{\proref}[1]{Proposition~\ref{#1}}

\newcommand{\remref}[1]{Remark~\ref{#1}}

\newcommand{\algref}[1]{Algorithm~\ref{#1}}

\newtheorem{theo}{\textbf{Theorem}}
\newtheorem{pro}{\textbf{Proposition}}

\newtheorem{rem}{\textbf{Remark}}

\renewcommand{\QED}{\hfill\blacksquare}

\newcommand{\qed}{\nobreak \ifvmode \relax \else
      \ifdim\lastskip<1.5em \hskip-\lastskip
      \hskip1.5em plus0em minus0.5em \fi \nobreak
      \vrule height0.75em width0.5em depth0.25em\fi}

\title{Distributed Quantization for Measurement of Correlated Sparse Sources over Noisy Channels\thanks{This work is partially presented in International Conference on Acoustics, Speech, and Signal Processing (ICASSP), Florence, Italy, May 2014.}}
\author{Amirpasha Shirazinia, \textit{Student Member, IEEE}, Saikat Chatterjee, \textit{Member, IEEE}, Mikael Skoglund, \textit{Senior Member, IEEE} \\
}

\begin{document}
\maketitle

\begin{abstract}
In this paper, we design and analyze distributed vector quantization (VQ) for compressed measurements of correlated sparse sources over noisy channels. Inspired by the framework of compressed sensing (CS) for acquiring compressed measurements of the sparse sources, we develop optimized quantization schemes that enable distributed encoding and transmission of CS measurements over noisy channels followed by joint decoding at a decoder. The optimality is addressed with respect to minimizing the sum of mean-square error (MSE) distortions between the sparse sources and their reconstruction vectors at the decoder. We propose a VQ encoder-decoder design via an iterative algorithm, and derive a lower-bound on the end-to-end MSE of the studied distributed system. Through several simulation studies, we evaluate the performance of the proposed distributed scheme.
\end{abstract}

\begin{IEEEkeywords}
   \noindent  Vector quantization, distributed compression, correlation, sparsity, compressed sensing, noisy channel.
\end{IEEEkeywords}

\section{Introduction} \label{sec:intro}
Source compression is one of the most important and contributing factors in developing digital signal processing. Various source compression approaches can be combined together in order to realize a better source compression scheme. In this paper, we endeavour to combine the strength of two standard compression approaches: (1) vector quantization (VQ) \cite{98:Gray} and its extension to transmission over noisy channels, and (2) compressed sensing (CS) \cite{08:Candes} -- a linear dimensionality reduction framework for sources that can be represented by sparse structures. We use VQ since it is theoretically the optimal block (vector) coding strategy \cite{98:Gray}. This is because of space-filling advantage (corresponding to dimensionality), shaping advantage (corresponding to probability density function) and memory advantage (corresponding to correlations between components) of VQ \cite{89:Lookabaugh} over structured quantizers, such as scalar or uniform quantizers. On the other hand, inspired by the CS framework, it is guaranteed to acquire few measurements from a sparse-structured signal vector without losing useful information, and to accurately reconstruct the original signal. We employ the VQ and CS compression approaches within a distributed setup, with correlated sparse sources, for transmission over noisy channels. Distributed source compression approaches (see, e.g., \cite{73:Slepian,76:Wyner-Ziv,99:Zamir,04:Zixiang,04:Chen,05:Oohama,06:Gastpar,06:Rebollo,08:Wagner,09:Dragotti,09:Wernersson,09:Niklas,13:Sun,14:Xuechen}) are of high practical relevance, and in modern applications, multiple remote sensors may observe a physical phenomenon. As a consequence, they are not able to cooperate with each other, and need to accomplish their tasks independently.

So far in literature, there is no attempt to investigate a unified scenario where a distributed channel-robust VQ scheme is applied for compressed sensing of correlated sparse sources. We attempt to deal with such a unified scenario via developing new algorithms and theory. Without loss of generality, we consider two correlated sparse sources. Each source is independently measured via a CS-based sensor. Then each of the two measurement vectors is independently quantized via a channel-robust VQ scheme. Finally, at the decoder, both sources are jointly reconstructed. For such a distributed setup, natural questions are: (1) How to design VQ for CS measurements that is robust against channel noise? (2) What is the theoretical performance limit of such system? We endeavour to answer both questions in this paper.

In a CS setup, quantization of CS measurement vector is an important issue due to the requirement of finite bit digital representation. Attempts have been made in literature to bring quantization and compressed sensing together,
but neither to use a distributed quantization setup nor to address robustness of quantizer when transmissions are made over noisy channels.
Some examples of existing quantization schemes for compressed sensing are as follows. In \cite{06:Candes2,10:Sinan,10:Zymnis,11:Dai,11:Jacques,12:Yan,12:Kamilov},
new CS reconstruction schemes have been developed in order to mitigate the effect of quantization. On the other hand \cite{09:Sun,12:Boufounos,11:Kamilov,12:Pasha1} considered
development of new quantization schemes to suit a CS reconstruction algorithm. Considering the aspect of non-linearity in any standard CS reconstruction, we recently developed analysis-by-synthesis-based quantizer in \cite{13:Pasha_journal}. Also, the work of \cite{08:Goyal,11:Dai,12:Laska,12:Pasha1} addressed the trade-off between resources of quantization (quantization bit rate) and CS (number of measurements). Further, \cite{07:Bajwa,09:Baron,12:Rambeloarison}
considered distributed CS setups, but without any quantization. Some works have studied connection between network coding and CS \cite{10:Feizi,12:Nabaee}, and between distributed lossless coding and CS \cite{09:Cheng}.

\subsection{Contributions}
We consider a distributed setup comprising two CS-based sensors measuring two correlated sparse source vectors.
The low-dimensional, and possibly noisy measurements are quantized using a VQ, and transmitted over
independent discrete memoryless channels (DMC's). The sparse source vectors are reconstructed
at the decoder from received noisy symbols. We use sum of mean square error (MSE) distortions
between the sparse source vectors and their reconstruction vectors at the decoder as the performance
criterion. The performance measure corresponds to the end-to-end MSE which will be described later.
Our contributions are as follows:
\begin{itemize}
    \item Establishing (necessary) conditions for optimality of VQ encoder-decoder pairs.
    \item Developing a VQ encoder-decoder design algorithm through an iterative algorithm.
    \item Deriving a lower-bound on the MSE performance.
\end{itemize}

For optimality of the VQ encoder-decoder pairs, we minimize the end-to-end MSE, and require to use the Bayesian framework of minimum mean square error (MMSE) estimation. Hence, We do not use prevalent CS reconstruction algorithms. We illustrate the performance of the proposed distributed design via simulation studies by varying correlation, compression resources and channel noise, and compare it with the derived lower-bound and centralized schemes.

\subsection{Outline}

The rest of the paper is organized as follows. In \secref{sec:descrpn}, we describe a two-sensor  distributed system model that we study; the description involves building blocks, performance criterion and objectives. \secref{sec:Design} is devoted to preliminaries and design of encoder-decoder pairs in a distributed fashion. Preliminaries, in \secref{sec:pre analysis}, include developing optimal estimation of correlated sparse sources from noisy CS measurements which helps us to design optimized encoding schemes, in \secref{subsec:enc design}, and decoding schemes, in \secref{subsec:dec design}. Thereafter, in \secref{sec:training}, we develop an encoder-decoder training algorithm. The end-to-end performance analysis of the studied distributed system is given in \secref{sec:analysis}. The performance evaluation is made in \secref{sec:numerical}, and the conclusions are drawn in \secref{sec:conclusion}.

\textit{Notations:} Random variables (RV's) will be denoted by upper-case letters while their realizations (instants) will be denoted by the respective lower-case letters. Hence, if $\mathbf{Z}$ denotes a random row vector $[Z_1,\ldots,Z_n]$, then $\mathbf{z} = [z_1,\ldots,z_n]$ indicates a  realization of $\mathbf{Z}$. Matrices will be represented by boldface characters. The trace of a matrix is shown by $\text{Tr}\{\cdot\}$ and transpose of a vector/matrix by $(\cdot)^\top$. Further, cardinality of a set is shown by $|\cdot|$.  We will use~$\mathbb{E}[\cdot]$ to denote the expectation operator, and conditional expectation $\mathbb{E}[Z|y]$ indicates  $\mathbb{E}[Z|Y=y]$. The $\ell_p$-norm ($p > 0$) of a vector $\mathbf{z}$ will be denoted by $\|\mathbf{z}\|_p = (\sum_{n=1}^N |z_n|^p)^{1/p}$. Also, $\|\mathbf{z}\|_0$ represents $\ell_0$-norm which is the number of non-zero coefficients in $\mathbf{z}$. 

\section{System Description and Problem Statement} \label{sec:descrpn}
In this section, we describe the system, depicted in \figref{fig:diagram_dist}, and associated assumptions. 

\begin{figure}[t]
  \begin{center}
  \psfrag{x_1}[][][0.7]{$\mathbf{X}_1$}
  \psfrag{x_2}[][][0.7]{$\mathbf{X}_2$}
  \psfrag{A_1}[][][1]{$\mathbf{\Phi}_1$}
  \psfrag{A_2}[][][1]{$\mathbf{\Phi}_2$}
  \psfrag{y_1}[][][0.7]{$\mathbf{Y}_1$}
  \psfrag{y_2}[][][0.7]{$\mathbf{Y}_2$}
  \psfrag{w_1}[][][0.6]{$\mathbf{W}_1$}
  \psfrag{w_2}[][][0.6]{$\mathbf{W}_2$}
  \psfrag{Q_1}[][][0.85]{$\textsf{E}_1$}
  \psfrag{Q_2}[][][0.85]{$\textsf{E}_2$}
  \psfrag{i_1}[][][0.7]{$I_1$}
  \psfrag{i_2}[][][0.7]{$I_2$}
  \psfrag{j_1}[][][0.7]{$J_1$}
  \psfrag{j_2}[][][0.7]{$J_2$}
  \psfrag{DMC_1}[][][0.75]{$P(j_1|i_1)$}
  \psfrag{DMC_2}[][][0.75]{$P(j_2|i_2)$}
  \psfrag{Channel}[][][0.75]{Channel}
  \psfrag{quant}[][][0.75]{Quantizer}
  \psfrag{Compressed}[][][0.75]{Compressed}
  \psfrag{Sensing}[][][0.75]{sensing}
  \psfrag{Enc}[][][0.75]{Encoder}
  \psfrag{CS Enc}[][][0.75]{CS}
  \psfrag{Dec}[][][0.75]{Decoder}
  \psfrag{D_1}[][][0.85]{$\textsf{D}_1$}
  \psfrag{D_2}[][][0.85]{$\textsf{D}_2$}
  \psfrag{x_h_1}[][][0.7]{$\widehat{\mathbf{X}}_1$}
  \psfrag{x_h_2}[][][0.7]{$\widehat{\mathbf{X}}_2$}
  \includegraphics[width=9cm]{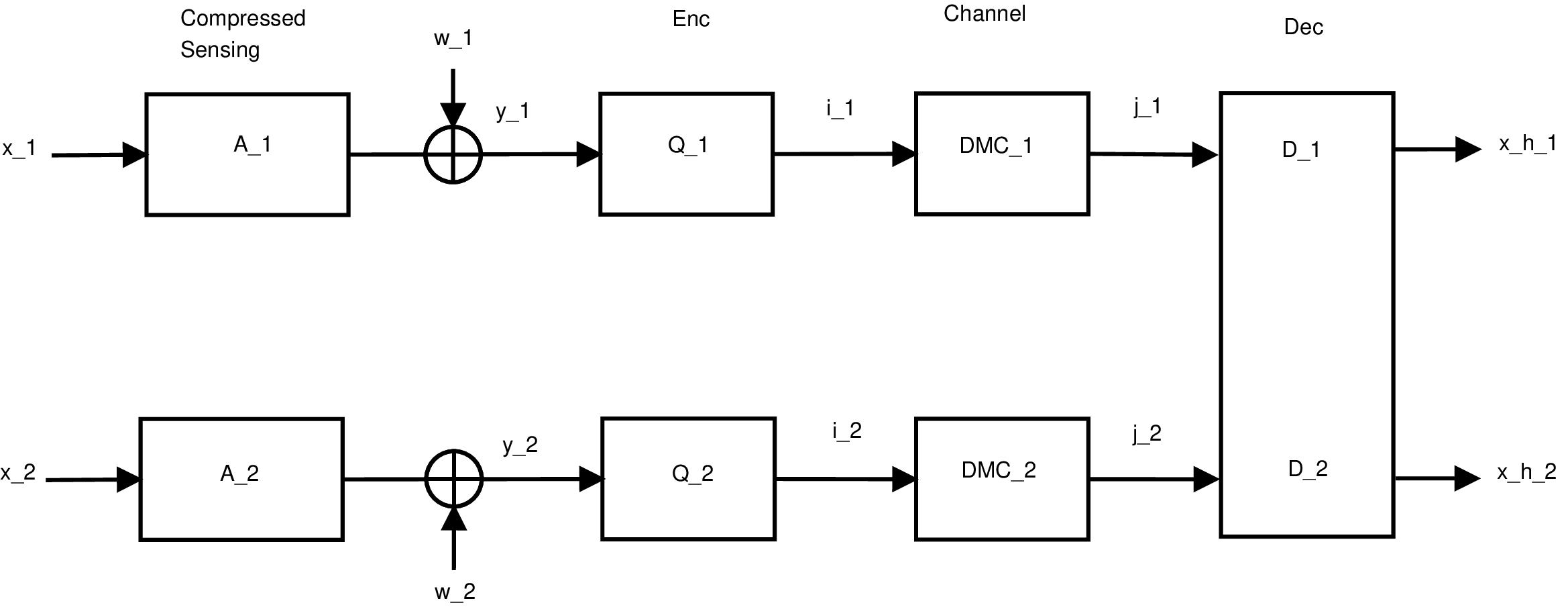}\\
  \caption{Distributed vector quantization for CS measurements over noisy channels.}\label{fig:diagram_dist}
  \end{center}
\end{figure}

\subsection{Compressed Sensing, Encoding, Transmission Through Noisy Channel and Decoding} \label{subsec:CS system}
We consider a $K$-sparse (in a known basis) vector $\mathbf{\Theta} \in \mathbb{R}^N$ comprised of $K$ random non-zero coefficients ($K \ll N$). We define the support set, i.e., random location of non-zero coefficients, of the vector $\mathbf{\Theta} \triangleq [\Theta_{1},\ldots,\Theta_N]^\top$ as $\mathcal{S} \triangleq \{n \in \{1,2,\ldots,N\}: \Theta_n \neq 0 \}$ with $|\mathcal{S}| = \|\mathbf{\Theta}\|_0 = K$. Further, we assume two correlated sparse sources $\mathbf{X}_1 \in \mathbb{R}^N$ and $\mathbf{X}_2 \in \mathbb{R}^N$ have a common support set in which their correlation is established by the following model
\begin{equation} \label{eq:correlation}
    \mathbf{X}_l = \mathbf{\Theta} + \mathbf{Z}_l, \hspace{0.2cm} l \in \{1,2\},
\end{equation}
where $\mathbf{Z}_l \triangleq [Z_{1,l}, \ldots ,Z_{N,l}]^\top \in \mathbb{R}^N$ is a random $K$-sparse vector with a common support set $\mathcal{S}$; thus $\|\mathbf{Z}_l\|_0 = K$. We also assume that $\mathbf{Z}_1$ and $\mathbf{Z}_2$ are uncorrelated with each other and with the common signal vector $\mathbf{\Theta}$. Such a joint sparsity model (JSM), also known as JSM-2, was earlier used for distributed CS in \cite{09:Baron}. Interested readers are referred to  \cite{05:Gilbert,05:Tropp,09:Baron} for application examples of JSM-2. 

The correlated sparse sources $\mathbf{X}_1$ and $\mathbf{X}_2$ are measured by CS-based sensors, leading to measurement vectors $\mathbf{Y}_1 \in \mathbb{R}^{M_1}$ and $\mathbf{Y}_2 \in \mathbb{R}^{M_2}$ described by equations
\begin{equation} \label{eq:meas eq1}
    \mathbf{Y}_l = \mathbf{\Phi}_l \mathbf{X}_l + \mathbf{W}_l, \hspace{0.2cm} l \in \{1,2\}, \hspace{0.1cm} \|\mathbf{X}_l\|_0 = K,
\end{equation}
where $\mathbf{\Phi}_l \in  \mathbb{R}^{M_l \times N}$ is a fixed sensing matrix of the $l^{th}$ sensor, and there is no specific model is assumed on the sensing matrix. Further, $\mathbf{W}_l \in \mathbb{R}^{M_l}$ is an additive measurement noise vector independent of other sources. Without loss of generality, we will assume that $M_1 = M_2 \triangleq M$, and according to CS requirement $M  < N$.

The encoders at the terminals have access to the correlated sparse sources indirectly through the noisy and lower-dimensional CS measurements. The encoder at terminal $l$ ($l \in \{1,2\}$) codes the noisy CS measurement vector $\mathbf{Y}_l$ without cooperation with the other encoder. The encoder mapping $\textsf{E}_l$  encodes $\mathbf{Y}_l$ to a transmission index $i_l$, i.e.,
\begin{equation} \label{eq:encoders}
    \textsf{E}_l: \mathbb{R}^M \rightarrow \mathcal{I}_l, \hspace{0.5cm}  l \in \{1,2\},
\end{equation}
where $i_l \in \mathcal{I}_l$, and $\mathcal{I}_l$ denotes a finite index set defined as $\mathcal{I}_l \triangleq \{0,1,\! \ldots \!,2^{R_l}-1\}$ with $|\mathcal{I}_l| \! \triangleq \!\mathfrak{R}_l\! =\! 2^{R_l}$. Here, $R_l$ is the assigned quantization rate for the $l^{th}$ encoder in bits/vector. We fix the total quantization rate at $R_1 \!+\! R_2 \triangleq R$ bits/vector. The encoders are specified by the regions $\{\mathcal{R}_{i_l}\}_{i_l=0}^{\mathfrak{R}_l-1}$ where $\bigcup_{i_l=0}^{\mathfrak{R}_l-1} \mathcal{R}_{i_l} \!=\! \mathbb{R}^M$ such that when $\mathbf{Y}_l \! \in \! \mathcal{R}_{i_l}$, the encoder outputs $\textsf{E}_l(\mathbf{Y}_l)\! =\! i_l \in \mathcal{I}_l$.

For transmission, we consider discrete memoryless channels (DMC's) consisting of discrete input and output alphabets, and transition probabilities. 
The DMC's accept the encoded indexes $i_l$, and output noisy symbols $j_l \in \mathcal{I}_l$, $l \in \{1,2\}$. The channel is defined by a random mapping $\mathcal{I}_l \rightarrow  \mathcal{I}_l$ characterized by known transition probabilities
\begin{equation} \label{eq:channel trans}
P(j_l | i_l) \triangleq \textrm{Pr}(J_l = j_l | I_l = i_l), \hspace{0.15cm} i_l,j_l \in \mathcal{I}_l, \forall l \in \{1,2\}.
\end{equation}

Finally, each decoder uses both noisy indexes $j_1 \in \mathcal{I}_1$ and $j_2 \in \mathcal{I}_2$ in order to make the estimate  of the sparse source vector, denoted by $\widehat{\mathbf{X}}_l \in \mathbb{R}^N$, $l \in \{1,2\}$. Given the received indexes $j_1$ and $j_2$, the decoder $\textsf{D}_l$ is characterized by a mapping
\begin{equation} \label{eq:decoders}
 \textsf{D}_l: \mathcal{I}_1 \times \mathcal{I}_2 \rightarrow \mathcal{C}_l, \hspace{0.2cm} l \in \{1,2\},
\end{equation}
where $\mathcal{C}_l \subseteq \mathbb{R}^N \times \mathbb{R}^N$, with $|\mathcal{C}_l| = 2^{R_1 + R_2}$, is a finite discrete \textit{codebook} set containing all reproduction \textit{codevectors}. The decoder's functionality is described by a look-up table; $(J_1 = j_1, J_2 = j_2) \Rightarrow (\widehat{\mathbf{X}}_1 = \textsf{D}_1(j_1,j_2),\widehat{\mathbf{X}}_2 = \textsf{D}_2(j_1,j_2))$.

\subsection{Performance Criterion} \label{sec:criterion}
We use end-to-end MSE as the performance criterion, defined as
\begin{equation} \label{eq:MSE}
    D \triangleq \frac{1}{2K} \sum_{l=1}^2 \mathbb{E}[\|\mathbf{X}_l - \widehat{\mathbf{X}}_l \|_2^2].
\end{equation}
Note that the MSE depends on \textit{CS reconstruction distortion}, \textit{quantization error} as well as \textit{channel noise}. Our goal, stated below, is to design VQ encoder-decoder pairs robust against all these three kinds of error.
\begin{prob}
Consider the system of \figref{fig:diagram_dist} for distributed VQ of CS measurements over DMC's. Given fixed quantization rates $R_l$ ($l \in \{1,2\}$) at terminal $l$, known sensing matrices $\mathbf{\Phi}_l$, and channel transition probabilities $P(j_l|i_l)$, we aim to find
\begin{itemize}
    \item encoder mapping $\textsf{E}_l$ in \eqref{eq:encoders} to separately encode CS measurements, and
    \item decoder mapping $\textsf{D}_l$ in \eqref{eq:decoders} to jointly decode correlated sparse sources,
\end{itemize}
such that the end-to-end MSE, in \eqref{eq:MSE}, is minimized.
\end{prob}

\section{Design Methodology} \label{sec:Design}
In this section, we show how to optimize the encoder and decoder mappings of the system of \figref{fig:diagram_dist}. We are aware that a fully joint design of the encoder and decoder mappings is intractable. Therefore, we optimize each mapping (with respect to minimizing the MSE in \eqref{eq:MSE}) by fixing the other mappings. Therefore, the resulting mappings fulfil necessary conditions for optimality. We first begin with some analytical preliminaries.

\subsection{Preliminaries} \label{sec:pre analysis}
Before proceeding with the design methodology to obtain the optimized encoder-decoder pair in order to minimize MSE, we need to develop some analytical results, discussed below. We first mention our assumptions.

\begin{assumption} \label{ass1}
    $\\$
\begin{enumerate}
\vspace{-0.5cm}
    \item The elements of the support set $\mathcal{S}$ are drawn uniformly at random from the set of all ${N \choose K}$ possibilities, denoted by $\mathbf{\Omega}$.
    \item The non-zero coefficients of $\mathbf{\Theta}$ and $\mathbf{Z}_l$ ($l \in \{1,2\}$) are iid Gaussian RV's with zero mean and variance $\sigma_\theta^2$ and $\sigma_{z_l}^2$, respectively. Without loss of generality, we assume that $\sigma_{z_1}^2 = \sigma_{z_2}^2 \triangleq \sigma_z^2$ and $\sigma_\theta^2 + \sigma_z^2 = 1$, i.e., the variance of a non-zero component in $\mathbf{X}_l$ is normalized to $1$.
    \item The measurement noise vector is distributed as $\mathbf{W}_l \sim \mathcal{N}(\mathbf{0}, \sigma_{w_l}^2 \mathbf{I}_M)$, $l \in \{1,2\}$, which is uncorrelated with the CS measurements and sources.
\end{enumerate}
\end{assumption}

To measure the amount of correlation between sources, we define the \textit{correlation ratio} as
\begin{equation} \label{eq:corr ratio}
    \rho \triangleq \sigma_\theta^2 / \sigma_z^2.
\end{equation}
Hence, $\sigma_\theta^2 = \frac{\rho}{1+\rho}$ and $\sigma_z^2 = \frac{1}{1+\rho}$, and $\rho \! \rightarrow \! \infty$ implies that the sources are highly correlated, whereas $\rho \! \rightarrow \! 0$ means that they are highly uncorrelated. Next, we define reconstruction distortion of the sparse sources from noisy CS measurements, termed \textit{CS distortion}, as
\begin{equation} \label{eq:CS distotion}
    D_{cs} \triangleq \frac{1}{2K} \sum_{l=1}^2 \mathbb{E}[\|\mathbf{X}_l - \widetilde{\mathbf{X}}_l\|_2^2],
\end{equation}
where $\widetilde{\mathbf{X}}_l \in \mathbb{R}^N$ ($l \in \{1,2\}$) is an estimation vector of the sparse source $\mathbf{X}_l$ from noisy CS measurements $\mathbf{Y}_1$ and $\mathbf{Y}_2$. Further, to minimize $D_{cs}$ in \eqref{eq:CS distotion}, we need to derive MMSE estimator of correlated sparse sources given noisy CS measurements. The following proposition provides an analytical expression for the MMSE estimator, which is also useful in deriving bounds later on the CS distortion (in \proref{pro:oracle bound}) and end-to-end distortion (in \theoref{theo2}).

\begin{pro}[\textit{MMSE estimation}] \label{theo1}
    Consider the linear noisy CS measurement equations in \eqref{eq:meas eq1} under \assref{ass1}. Then, the MMSE estimation of $\mathbf{X}_l$ given the noisy CS measurement vector $\mathbf{y} \triangleq [\mathbf{y}_1^\top \hspace{0.15cm} \mathbf{y}_2^\top]^\top$ that minimizes $D_{cs}$ in \eqref{eq:CS distotion}, is obtained as $\widetilde{\mathbf{x}}_l^\star(\mathbf{y}) = \mathbb{E}[\mathbf{X}_l|\mathbf{y}]$ which has the following closed form expression
    \begin{equation} \label{eq:MMSE closed}
    \begin{aligned}
        &\widetilde{\mathbf{x}}^\star(\mathbf{y}) \triangleq [\widetilde{\mathbf{x}}_1^\star(\mathbf{y})^\top \hspace{0.1cm} \widetilde{\mathbf{x}}_2^\star(\mathbf{y})^\top]^\top  = \frac{\sum_{\mathcal{S} \subset \mathbf{\Omega}} \beta_\mathcal{S} \cdot \widetilde{\mathbf{x}}^\star(\mathbf{y},\mathcal{S})}{\sum_{\mathcal{S} \subset \mathbf{\Omega}} \beta_\mathcal{S}}, &
    \end{aligned}
    \end{equation}
    where $\widetilde{\mathbf{x}}^\star(\mathbf{y},\mathcal{S}) \triangleq \mathbb{E}[\mathbf{X}|\mathbf{y},\mathcal{S}]$, in which $\mathbf{X} \triangleq [\mathbf{X}_1^\top \hspace{0.15cm} \mathbf{X}_2^\top]^\top$, and within its support
    \begin{equation} \label{eq:MMSE closed details1}
    \begin{aligned}
        &\widetilde{\mathbf{x}}^\star(\mathbf{y},\mathcal{S}) =  \left[\begin{array}{c  c  c}
        \mathbf{I}_K & \mathbf{I}_K & \mathbf{0}_K  \\
        \mathbf{I}_K & \mathbf{0}_K & \mathbf{I}_K\\
      \end{array}\right] \mathbf{C}^\top \mathbf{D}^{-1} \mathbf{y},&
    \end{aligned}
    \end{equation}
    and otherwise zero. Further,
\begin{subequations}
            
\begin{align}
    \beta_\mathbf{s} &= e^{\frac{1}{2} \left(\mathbf{y}^{\! \top} (\mathbf{N}^{-\!1} \mathbf{F}^{\! \top} (\mathbf{E}^{-\!1} \!+ \mathbf{F}^{\!\top } \mathbf{N}^{-\!1}\mathbf{F})^{\!-1} \mathbf{F} \mathbf{N}^{-\!1})\mathbf{y} - \ln \det (\mathbf{E}^{-\!1} \!+ \mathbf{F}^{\! \top} \mathbf{N}^{-\!1}\mathbf{F})  \right)} \label{eq:MMSE closed details2}& \\
    \mathbf{C} &=
  \left[\begin{array}{c  c c}
    \frac{\rho}{1 + \rho} \mathbf{\Phi}_{1,\mathcal{S}} & \frac{1}{1 + \rho} \mathbf{\Phi}_{1,\mathcal{S}} & \mathbf{0}_{M \times K} \\
    \frac{\rho}{1 + \rho} \mathbf{\Phi}_{2,\mathcal{S}} & \mathbf{0}_{M \times K} & \frac{1}{1 + \rho} \mathbf{\Phi}_{2,\mathcal{S}}  \\
  \end{array}\right],& \label{eq:accessories1} \\
  \mathbf{D} &= \left[\begin{array}{c c}
    \mathbf{\Phi}_{1,\mathcal{S}} \mathbf{\Phi}_{1,\mathcal{S}}^\top + \sigma_{w_1}^2 \mathbf{I}_M & \frac{\rho}{1 + \rho} \mathbf{\Phi}_{1,\mathcal{S}} \mathbf{\Phi}_{2,\mathcal{S}}^\top \\
    \frac{\rho}{1 + \rho} \mathbf{\Phi}_{2,\mathcal{S}} \mathbf{\Phi}_{1,\mathcal{S}}^\top &  \mathbf{\Phi}_{2,\mathcal{S}} \mathbf{\Phi}_{2,\mathcal{S}}^\top + \sigma_{w_2}^2 \mathbf{I}_M \\
  \end{array}\right],& \label{eq:accessories2} \\
  \mathbf{N} &= \left[\begin{array}{c c}
    \sigma_{w_1}^2 \mathbf{I}_M & \mathbf{0}_M \\
    \mathbf{0}_M & \sigma_{w_2}^2 \mathbf{I}_M  \\
    \end{array}\right],& \label{eq:accessories3} \\
  \mathbf{E} &= \left[\begin{array}{c c c}
    \frac{\rho}{1 + \rho} \mathbf{I}_K & \mathbf{0}_K & \mathbf{0}_K \\
    \mathbf{0}_K & \frac{1}{1 + \rho} \mathbf{I}_K & \mathbf{0}_K  \\
    \mathbf{0}_K & \mathbf{0}_K & \frac{1}{1 + \rho} \mathbf{I}_K \\
    \end{array}\right],& \label{eq:accessories4} \\
  \mathbf{F} &= \left[\begin{array}{c  c c}
     \mathbf{\Phi}_{1,\mathcal{S}} &  \mathbf{\Phi}_{1,\mathcal{S}} & \mathbf{0}_{M \times K} \\
     \mathbf{\Phi}_{2,\mathcal{S}} & \mathbf{0}_{M \times K} & \mathbf{\Phi}_{2,\mathcal{S}} \end{array}\right], \label{eq:accessories5}&
\end{align}
\end{subequations}
where $\mathbf{\Phi}_{l,\mathcal{S}} \in \mathbb{R}^{M \times K}$, $l \!\in\! \{1,2\}$, is formed by choosing the columns of $\mathbf{\Phi}_l$ indexed by the elements of support set $\mathcal{S}$.\footnote[1]{Here, for the sake of notational simplicity, we drop the dependency of the matrices $\mathbf{C}$, $\mathbf{D}$ and $\mathbf{F}$ on $\mathcal{S}$.}
\end{pro}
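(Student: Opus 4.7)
The plan is to apply the law of total expectation with respect to the unknown (random) support set $\mathcal{S}$, and then reduce each conditional problem to a standard jointly Gaussian MMSE computation. Write $\mathbf{X} = [\mathbf{X}_1^\top \; \mathbf{X}_2^\top]^\top$ and decompose
\begin{equation*}
\mathbb{E}[\mathbf{X}\mid \mathbf{y}] \;=\; \sum_{\mathcal{S} \subset \mathbf{\Omega}} \mathbb{E}[\mathbf{X}\mid \mathbf{y},\mathcal{S}]\, P(\mathcal{S}\mid \mathbf{y}),
\end{equation*}
and use Bayes' rule $P(\mathcal{S}\mid \mathbf{y}) = p(\mathbf{y}\mid \mathcal{S}) P(\mathcal{S}) / \sum_{\mathcal{S}'} p(\mathbf{y}\mid \mathcal{S}') P(\mathcal{S}')$. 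Assumption \ref{ass1}(1) makes $P(\mathcal{S})$ uniform, so it cancels and the posterior support weights reduce to normalised likelihoods; these will turn out to be the $\beta_{\mathcal{S}}$'s.

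Next, conditional on $\mathcal{S}$, all randomness is jointly Gaussian. Stacking the three $K$-dimensional nonzero blocks as $\mathbf{V} \triangleq [\mathbf{\Theta}_{\mathcal{S}}^\top \; \mathbf{Z}_{1,\mathcal{S}}^\top \; \mathbf{Z}_{2,\mathcal{S}}^\top]^\top$, Assumption \ref{ass1}(2) gives $\mathbf{V} \sim \mathcal{N}(\mathbf{0},\mathbf{E})$ with $\mathbf{E}$ as in \eqref{eq:accessories4}, and the measurement equations \eqref{eq:meas eq1} reduce to $\mathbf{y} = \mathbf{F}\mathbf{V} + \mathbf{W}$ with $\mathbf{W} \sim \mathcal{N}(\mathbf{0},\mathbf{N})$, and $\mathbf{F},\mathbf{N}$ exactly as in \eqref{eq:accessories5}, \eqref{eq:accessories3}. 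Standard linear MMSE for jointly Gaussian variables yields
\begin{equation*}
\mathbb{E}[\mathbf{V}\mid \mathbf{y},\mathcal{S}] \;=\; \mathbf{E}\mathbf{F}^\top (\mathbf{F}\mathbf{E}\mathbf{F}^\top + \mathbf{N})^{-1}\mathbf{y}.
\end{equation*}
A direct (but routine) block-matrix multiplication shows $\mathbf{E}\mathbf{F}^\top = \mathbf{C}^\top$ and $\mathbf{F}\mathbf{E}\mathbf{F}^\top + \mathbf{N} = \mathbf{D}$, with $\mathbf{C},\mathbf{D}$ as in \eqref{eq:accessories1}--\eqref{eq:accessories2}. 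Since $\mathbf{X}_{l,\mathcal{S}} = \mathbf{\Theta}_{\mathcal{S}} + \mathbf{Z}_{l,\mathcal{S}}$, left-multiplying by the selector $[\mathbf{I}_K \; \mathbf{I}_K \; \mathbf{0}_K;\; \mathbf{I}_K \; \mathbf{0}_K \; \mathbf{I}_K]$ converts $\mathbb{E}[\mathbf{V}\mid \mathbf{y},\mathcal{S}]$ into $[\mathbb{E}[\mathbf{X}_{1,\mathcal{S}}\mid \mathbf{y},\mathcal{S}]^\top \; \mathbb{E}[\mathbf{X}_{2,\mathcal{S}}\mid \mathbf{y},\mathcal{S}]^\top]^\top$, which is \eqref{eq:MMSE closed details1}. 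Off the support, $\mathbf{X}_l$ is deterministically zero, so the conditional mean is zero there.

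Finally, for the mixture weights, the conditional likelihood is $p(\mathbf{y}\mid \mathcal{S}) \propto |\mathbf{D}|^{-1/2}\exp\!\bigl(-\tfrac{1}{2}\mathbf{y}^\top \mathbf{D}^{-1}\mathbf{y}\bigr)$. I would apply the Woodbury identity
\begin{equation*}
(\mathbf{F}\mathbf{E}\mathbf{F}^\top + \mathbf{N})^{-1} \;=\; \mathbf{N}^{-1} - \mathbf{N}^{-1}\mathbf{F}(\mathbf{E}^{-1} + \mathbf{F}^\top\mathbf{N}^{-1}\mathbf{F})^{-1}\mathbf{F}^\top\mathbf{N}^{-1},
\end{equation*}
together with Sylvester's determinant lemma $|\mathbf{D}| = |\mathbf{N}|\,|\mathbf{E}|\,|\mathbf{E}^{-1} + \mathbf{F}^\top\mathbf{N}^{-1}\mathbf{F}|$. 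Because $\mathbf{E}$ and $\mathbf{N}$ do not depend on $\mathcal{S}$, the factors $|\mathbf{N}|$, $|\mathbf{E}|$, and $\exp(-\tfrac{1}{2}\mathbf{y}^\top\mathbf{N}^{-1}\mathbf{y})$ are common to every $\mathcal{S}$ and can be absorbed into the denominator of the Bayes ratio. What remains in the numerator is precisely the quantity $\beta_{\mathcal{S}}$ of \eqref{eq:MMSE closed details2}, so $P(\mathcal{S}\mid \mathbf{y}) = \beta_{\mathcal{S}}/\sum_{\mathcal{S}'}\beta_{\mathcal{S}'}$, and assembling this with the conditional mean from the previous paragraph yields \eqref{eq:MMSE closed}.

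The main obstacle is not any single step but the careful block-matrix bookkeeping: verifying the two identities $\mathbf{E}\mathbf{F}^\top = \mathbf{C}^\top$ and $\mathbf{F}\mathbf{E}\mathbf{F}^\top + \mathbf{N} = \mathbf{D}$, and then tracking which terms produced by Woodbury and Sylvester are $\mathcal{S}$-independent so that they may be safely absorbed into the normalising constant of the posterior over $\mathcal{S}$. Invertibility of $\mathbf{D}$ and $\mathbf{E}^{-1} + \mathbf{F}^\top \mathbf{N}^{-1}\mathbf{F}$, needed to apply both lemmas, follows from $\mathbf{N} \succ \mathbf{0}$ (since $\sigma_{w_l}^2 > 0$) and $\mathbf{E} \succ \mathbf{0}$ ($\rho > 0$).
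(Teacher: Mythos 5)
Your proposal is correct and follows essentially the same route as the paper's proof: marginalize over the support set with a uniform prior, compute the conditional mean of the stacked Gaussian vector $[\mathbf{\Theta}_\mathcal{S}^\top\;\mathbf{Z}_{1,\mathcal{S}}^\top\;\mathbf{Z}_{2,\mathcal{S}}^\top]^\top$ (your identities $\mathbf{E}\mathbf{F}^\top=\mathbf{C}^\top$ and $\mathbf{F}\mathbf{E}\mathbf{F}^\top+\mathbf{N}=\mathbf{D}$ do hold), apply the selector matrix, and identify the posterior support weights with $\beta_\mathcal{S}$. The only mechanical difference is that you read off $p(\mathbf{y}\mid\mathcal{S})\propto\det(\mathbf{D})^{-1/2}e^{-\frac{1}{2}\mathbf{y}^\top\mathbf{D}^{-1}\mathbf{y}}$ and convert it via Woodbury and the determinant lemma, whereas the paper evaluates the same likelihood by an explicit Gaussian integral over $\mathbf{q}_\mathcal{S}$; both give $\beta_\mathcal{S}$, and your dimensionally consistent form $\mathbf{N}^{-1}\mathbf{F}(\mathbf{E}^{-1}+\mathbf{F}^\top\mathbf{N}^{-1}\mathbf{F})^{-1}\mathbf{F}^\top\mathbf{N}^{-1}$ in fact exposes a harmless transposition typo in \eqref{eq:MMSE closed details2}.
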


\begin{proof}
    The proof is given in Appendix \ref{app A}
\end{proof}

Finding an expression for the resulting MSE of the MMSE estimator \eqref{eq:MMSE closed} is analytically intractable, and there is no closed form solution. Alternatively, the resulting MSE can be lower-bounded by that of the \textit{oracle estimator} -- the ideal estimator that knows the true support set \textit{a priori}. In our studied distributed CS setup, the Bayesian oracle estimator, denoted by $\widetilde{\mathbf{X}}^{(or)}$, is derived from \eqref{eq:MMSE closed details1} given the \textit{a priori} known support, denoted by $\mathcal{S}^{(or)}$, i.e., $\widetilde{\mathbf{X}}^{(or)} = \mathbb{E}[\mathbf{X}|\mathbf{Y},\mathcal{S}^{(or)}]$. The MSE of the oracle estimator, denoted by $D_{cs}^{(or)}$, is expressed in the following proposition, which is also useful for deriving a lower-bound on end-to-end distortion shown later in \theoref{theo2}.

\begin{pro}[\textit{Oracle lower-bound}] \label{pro:oracle bound}
Let $\mathcal{S}^{(or)}$ denote the oracle-known support set for each realization of $\mathbf{X}_1$ and $\mathbf{X}_2$. Then, under \assref{ass1}, $D_{cs}$ in \eqref{eq:CS distotion} is lower-bounded as
\begin{equation} \label{eq:oracle lb}
    D_{cs} \geq D_{cs}^{(or)},
\end{equation}
where $D_{cs}^{(or)}=$
\begin{equation} \label{eq:oracle bound acc}
  1 - \frac{1}{2K} \text{Tr}\left\{
    \left[\begin{array}{c c c}
          2\mathbf{I}_K&  \mathbf{I}_K & \mathbf{I}_K \\
         \mathbf{I}_K & \mathbf{I}_K & \mathbf{0}_K \\
        \mathbf{I}_K & \mathbf{I}_K & \mathbf{I}_K \\
    \end{array}\right]\cdot \frac{1}{{N \choose K}}
     \sum_{\mathcal{S}^{(or)} \subset \mathbf{\Omega}}\mathbf{C}^\top \mathbf{D}^{\!-\!1} \mathbf{C} \right\},
\end{equation}
\end{pro}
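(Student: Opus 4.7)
The plan has two parts: first establish the oracle inequality $D_{cs}\geq D_{cs}^{(or)}$, then derive the closed-form expression in \eqref{eq:oracle bound acc}.

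For the inequality, I would invoke the standard ``conditioning reduces MSE'' fact for MMSE estimators. The distortion $D_{cs}$ in \eqref{eq:CS distotion} is minimized, per \proref{theo1}, by $\widetilde{\mathbf{x}}^\star(\mathbf{y}) = \mathbb{E}[\mathbf{X}\mid \mathbf{Y}=\mathbf{y}]$, whereas the oracle estimator is $\widetilde{\mathbf{X}}^{(or)} = \mathbb{E}[\mathbf{X}\mid \mathbf{Y},\mathcal{S}^{(or)}]$, which uses strictly more information. Since $\mathbb{E}\|\mathbf{X}-\mathbb{E}[\mathbf{X}\mid\mathbf{Y}]\|_2^2 \geq \mathbb{E}\|\mathbf{X}-\mathbb{E}[\mathbf{X}\mid\mathbf{Y},\mathcal{S}^{(or)}]\|_2^2$, the bound \eqref{eq:oracle lb} is immediate.

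For the closed form, the key observation is to repackage the problem as a single Gaussian linear estimation once the support is fixed. Let $\mathbf{U}\triangleq[\mathbf{\Theta}_{\mathcal{S}}^{\top},\mathbf{Z}_{1,\mathcal{S}}^{\top},\mathbf{Z}_{2,\mathcal{S}}^{\top}]^{\top}$ and $\mathbf{W}\triangleq[\mathbf{W}_1^{\top},\mathbf{W}_2^{\top}]^{\top}$. Under \assref{ass1}, $\mathbf{U}\sim\mathcal{N}(\mathbf{0},\mathbf{E})$ and $\mathbf{W}\sim\mathcal{N}(\mathbf{0},\mathbf{N})$ with $\mathbf{E},\mathbf{N}$ as in \eqref{eq:accessories4}, \eqref{eq:accessories3}. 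The restriction of $\mathbf{X}=[\mathbf{X}_1^{\top},\mathbf{X}_2^{\top}]^{\top}$ to the support, by \eqref{eq:correlation}, is $\mathbf{X}_{\mathcal{S}}=\mathbf{A}\,\mathbf{U}$ with
\[
\mathbf{A}=\left[\begin{array}{ccc}\mathbf{I}_K&\mathbf{I}_K&\mathbf{0}_K\\ \mathbf{I}_K&\mathbf{0}_K&\mathbf{I}_K\end{array}\right],
\]
and the CS measurement equation on the support reads $\mathbf{Y}=\mathbf{F}\,\mathbf{U}+\mathbf{W}$ with $\mathbf{F}$ as in \eqref{eq:accessories5}. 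A short matrix check (which I would include) shows $\mathbf{C}=\mathbf{F}\mathbf{E}$ and $\mathbf{D}=\mathbf{F}\mathbf{E}\mathbf{F}^{\top}+\mathbf{N}$, so the Gaussian MMSE covariance formula gives
\[
\mathrm{Cov}(\mathbf{U}\mid \mathbf{Y},\mathcal{S})=\mathbf{E}-\mathbf{E}\mathbf{F}^{\top}\mathbf{D}^{-1}\mathbf{F}\mathbf{E}=\mathbf{E}-\mathbf{C}^{\top}\mathbf{D}^{-1}\mathbf{C}.
\]

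Applying $\mathbf{A}$ and using the non-support entries are identically zero in both $\mathbf{X}$ and $\widetilde{\mathbf{X}}^{(or)}$, I compute
\[
\mathbb{E}\bigl[\|\mathbf{X}-\widetilde{\mathbf{X}}^{(or)}\|_2^2 \,\big|\, \mathcal{S}\bigr] = \tr\!\bigl(\mathbf{A}^{\top}\mathbf{A}\,(\mathbf{E}-\mathbf{C}^{\top}\mathbf{D}^{-1}\mathbf{C})\bigr),
\]
where $\mathbf{A}^{\top}\mathbf{A}$ is precisely the $3\times 3$ block matrix displayed in \eqref{eq:oracle bound acc} (modulo a harmless symmetrization inside the trace). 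A direct computation yields $\tr(\mathbf{A}^{\top}\mathbf{A}\,\mathbf{E})=2K\tfrac{\rho}{1+\rho}+\tfrac{2K}{1+\rho}=2K$. Finally, I marginalize over $\mathcal{S}^{(or)}$, which is uniform over the $\binom{N}{K}$ supports in $\mathbf{\Omega}$ by \assref{ass1}, push the expectation inside the trace, and divide by $2K$ to obtain \eqref{eq:oracle bound acc}.

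The only subtle step is the algebraic verification that $\mathbf{C}=\mathbf{F}\mathbf{E}$ and $\mathbf{D}=\mathbf{F}\mathbf{E}\mathbf{F}^{\top}+\mathbf{N}$, which collapses the expression into the form $\mathbf{C}^{\top}\mathbf{D}^{-1}\mathbf{C}$ appearing in \eqref{eq:oracle bound acc}; everything else is an application of the standard Gaussian MMSE identity plus the linearity of trace and expectation. There is no technical obstacle beyond careful bookkeeping of block-matrix multiplications.
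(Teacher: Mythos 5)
Your proposal is correct and takes essentially the same route as the paper's own proof: the oracle bound follows from MMSE optimality under extra conditioning, and the closed form from the Gaussian error covariance $\mathbf{E}-\mathbf{C}^{\top}\mathbf{D}^{-1}\mathbf{C}$ projected through the block selection matrices, traced, and averaged over the uniformly distributed support (your verification that $\mathbf{C}=\mathbf{F}\mathbf{E}$ and $\mathbf{D}=\mathbf{F}\mathbf{E}\mathbf{F}^{\top}+\mathbf{N}$ is exactly what the paper's appeal to the standard Gaussian MMSE theorem encodes). One remark: $\mathbf{A}^{\top}\mathbf{A}$ has $[\mathbf{I}_K\ \mathbf{0}_K\ \mathbf{I}_K]$ as its last block row, so the mismatch with the displayed matrix in \eqref{eq:oracle bound acc} is an apparent typo in the paper rather than a ``harmless symmetrization,'' and your version is the consistent one.
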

and the matrices $\mathbf{C}$ and $\mathbf{D}$ are determined by \eqref{eq:accessories1} and \eqref{eq:accessories2}, respectively.
\begin{proof}
    The proof is given in Appendix \ref{app B}
\end{proof}

In addition to the MMSE estimator, the conditional probability density functions (pdf's) $p(\mathbf{y}_2|\mathbf{y}_1)$ and $p(\mathbf{y}_1|\mathbf{y}_2)$ also need to be considered later for optimized encoding/decdoing schemes. For the sake of completeness, we give an expression for $p(\mathbf{y}_2|\mathbf{y}_1)$ in the following proposition.
\begin{pro} [\textit{Conditional pdf}] \label{theo1-2}
    Under \assref{ass1}, the conditional pdf $p(\mathbf{y}_2|\mathbf{y}_1)$ is
    \begin{equation} \label{cond prob}
        p(\mathbf{y}_2 | \mathbf{y}_1) = \frac{\sqrt{1 + \rho}\sum_{\mathcal{S} \subset \mathbf{\Omega}} \beta_\mathcal{S}}{(\sqrt{2\pi}\sigma_{w_1})^M \sigma_{w_2}^K \sum_{\mathcal{S} \subset \mathbf{\Omega}} \gamma_\mathcal{S}},
    \end{equation}
    where $\beta_\mathcal{S}$ is specified by \eqref{eq:MMSE closed details2}, and using $\mathbf{\Psi} \triangleq [\mathbf{\Phi}_{1,\mathcal{S}} \hspace{0.15cm} \mathbf{\Phi}_{1,\mathcal{S}}]$
    \begin{equation} \label{eq:accessories cond}
    \begin{aligned}
        &\gamma_\mathcal{S} = e^{\frac{1}{2} \left(\mathbf{y}_1^\top (\frac{1}{\sigma_{w_1}^2}\mathbf{\Psi} (\mathbf{\Psi}^\top \mathbf{\Psi})^{-1} \mathbf{\Psi}^\top  - \mathbf{I}_M) \mathbf{y}_1 - \ln \det (\mathbf{\Psi}^\top \mathbf{\Psi})  \right)}. &\\
    \end{aligned}
    \end{equation}
\end{pro}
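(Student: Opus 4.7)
The plan is to compute the conditional density via the identity $p(\mathbf{y}_2\mid\mathbf{y}_1) = p(\mathbf{y})/p(\mathbf{y}_1)$, where $\mathbf{y}=[\mathbf{y}_1^\top\;\mathbf{y}_2^\top]^\top$, evaluating both the joint and the marginal by averaging the support-conditional Gaussian densities against the uniform prior $P(\mathcal{S})=1/{N\choose K}$.

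First, I would establish that, conditioned on $\mathcal{S}$, $\mathbf{Y}$ is zero-mean Gaussian with covariance $\mathbf{D}$ from \eqref{eq:accessories2}. This follows because $\mathrm{Cov}(\mathbf{X}_{l,\mathcal{S}})=(\sigma_\theta^2+\sigma_z^2)\mathbf{I}_K=\mathbf{I}_K$ and $\mathrm{Cov}(\mathbf{X}_{1,\mathcal{S}},\mathbf{X}_{2,\mathcal{S}})=\sigma_\theta^2\mathbf{I}_K=\tfrac{\rho}{1+\rho}\mathbf{I}_K$, which (combined with the independent measurement noises) exactly reproduces the block structure of $\mathbf{D}$. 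Hence
\begin{equation*}
p(\mathbf{y}) \;=\; \frac{1}{{N\choose K}(2\pi)^{M}}\sum_{\mathcal{S}\subset \mathbf{\Omega}} \frac{\exp\!\left(-\tfrac{1}{2}\mathbf{y}^\top \mathbf{D}^{-1}\mathbf{y}\right)}{\sqrt{\det \mathbf{D}}}.
\end{equation*}

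Next, observe that $\mathbf{D}=\mathbf{N}+\mathbf{F}\mathbf{E}\mathbf{F}^\top$ for $\mathbf{N},\mathbf{E},\mathbf{F}$ in \eqref{eq:accessories3}--\eqref{eq:accessories5}; I would apply the Woodbury matrix identity to rewrite $\mathbf{D}^{-1}=\mathbf{N}^{-1}-\mathbf{N}^{-1}\mathbf{F}(\mathbf{E}^{-1}+\mathbf{F}^\top\mathbf{N}^{-1}\mathbf{F})^{-1}\mathbf{F}^\top\mathbf{N}^{-1}$ and the matrix-determinant lemma to give $\det\mathbf{D}=\det\mathbf{N}\cdot\det\mathbf{E}\cdot\det(\mathbf{E}^{-1}+\mathbf{F}^\top\mathbf{N}^{-1}\mathbf{F})$. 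Substituting, each summand factorizes into the $\mathcal{S}$-independent piece $\exp(-\tfrac{1}{2}\mathbf{y}^\top\mathbf{N}^{-1}\mathbf{y})/\sqrt{\det\mathbf{N}\det\mathbf{E}}$ times exactly the quantity $\beta_{\mathcal{S}}$ defined in \eqref{eq:MMSE closed details2}. Using $\det\mathbf{N}=(\sigma_{w_1}\sigma_{w_2})^{2M}$ and $\det\mathbf{E}=\rho^{K}/(1+\rho)^{3K}$, I can then read off the constant prefactor in $p(\mathbf{y})$.

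I would then carry out the analogous computation for $p(\mathbf{y}_1)$: conditioned on $\mathcal{S}$, $\mathbf{Y}_1$ is zero-mean Gaussian with covariance $\mathbf{\Phi}_{1,\mathcal{S}}\mathbf{\Phi}_{1,\mathcal{S}}^\top+\sigma_{w_1}^{2}\mathbf{I}_M$. A second application of Woodbury and the determinant lemma, written in the form that pulls out $(\mathbf{\Psi}^\top\mathbf{\Psi})^{-1}$ and $\det(\mathbf{\Psi}^\top\mathbf{\Psi})$ for $\mathbf{\Psi}$ as defined in \proref{theo1-2}, brings the support-conditional density into precisely the exponential form of $\gamma_{\mathcal{S}}$ in \eqref{eq:accessories cond}, whose quadratic part is $\mathbf{y}_1^\top(\mathbf{\Psi}(\mathbf{\Psi}^\top\mathbf{\Psi})^{-1}\mathbf{\Psi}^\top/\sigma_{w_1}^{2}-\mathbf{I}_M)\mathbf{y}_1$. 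Finally, forming the ratio $p(\mathbf{y})/p(\mathbf{y}_1)$: the combinatorial factor $1/{N\choose K}$, the Gaussian factor $\exp(-\tfrac{1}{2}\|\mathbf{y}_1\|_2^{2}/\sigma_{w_1}^{2})$ arising from the $\mathbf{y}_1$-block of $\mathbf{y}^\top\mathbf{N}^{-1}\mathbf{y}$, and several powers of $(2\pi)$, $\sigma_{w_1}$ and $(1+\rho)$ cancel, leaving the stated expression \eqref{cond prob}.

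The individual Woodbury and determinant manipulations are routine; the main obstacle will be bookkeeping. One must verify that the accumulated powers of $\sigma_{w_1}$, $\sigma_{w_2}$, $1+\rho$ and $2\pi$ from $\sqrt{\det\mathbf{N}\det\mathbf{E}}$, from the marginal density of $\mathbf{Y}_1$, and from the Gaussian normalizations indeed collapse to the precise factor $\sqrt{1+\rho}/[(\sqrt{2\pi}\sigma_{w_1})^{M}\sigma_{w_2}^{K}]$ stated in \eqref{cond prob}. This also requires a careful block-matrix check that $\mathbf{F}\mathbf{E}\mathbf{F}^\top=\mathbf{D}-\mathbf{N}$ with $\mathbf{F},\mathbf{E},\mathbf{N}$ as given, and that the low-rank decomposition used for the $\mathbf{Y}_1$-marginal produces exactly the matrix $\mathbf{\Psi}$ appearing in \proref{theo1-2}.
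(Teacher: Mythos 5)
Your proposal is correct and follows essentially the same route as the paper's Appendix C proof: both express $p(\mathbf{y}_2|\mathbf{y}_1)$ as the ratio of the support-averaged Gaussian likelihoods $\sum_{\mathcal{S}}p(\mathbf{y}|\mathcal{S})P(\mathcal{S})$ and $\sum_{\mathcal{S}}p(\mathbf{y}_1|\mathcal{S})P(\mathcal{S})$, identifying the $\mathcal{S}$-dependent factors with $\beta_{\mathcal{S}}$ and $\gamma_{\mathcal{S}}$ and collecting the $\mathcal{S}$-independent constants into the prefactor. The only difference is mechanical: the paper obtains these factors by explicitly integrating out the latent Gaussian vectors $\mathbf{q}_{\mathcal{S}}$ and $\mathbf{g}_{\mathcal{S}}$ (with $p(\mathbf{y}_1|\mathbf{g}_{\mathcal{S}},\mathcal{S})=\mathcal{N}(\mathbf{\Psi}\mathbf{g}_{\mathcal{S}},\sigma_{w_1}^2\mathbf{I}_M)$), whereas you write the per-support marginal covariances ($\mathbf{D}=\mathbf{N}+\mathbf{F}\mathbf{E}\mathbf{F}^{\top}$, resp.\ $\mathbf{\Phi}_{1,\mathcal{S}}\mathbf{\Phi}_{1,\mathcal{S}}^{\top}+\sigma_{w_1}^{2}\mathbf{I}_M$) and apply the Woodbury and determinant identities, which is the same algebra in a different guise.
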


\begin{proof}
    The proof is given in Appendix \ref{app C}
\end{proof}

By symmetry, $p(\mathbf{y}_1 | \mathbf{y}_2)$ can be obtained from the same expression as in \eqref{cond prob} with the only difference that $\mathbf{y}_1$ in \eqref{eq:accessories cond} is replaced by $\mathbf{y}_2$ and $\mathbf{\Psi}$ by $[\mathbf{\Phi}_{2,\mathcal{S}} \hspace{0.15cm} \mathbf{\Phi}_{2,\mathcal{S}}]$.

Next, we show the optimization methods for encoder and decoder mappings in the system of \figref{fig:diagram_dist}. 
\subsection{Encoder Design} \label{subsec:enc design}

Let us first optimize $\textsf{E}_1$ while keeping $\textsf{E}_2$, $\textsf{D}_1$ and $\textsf{D}_2$ fixed and known. We have that
\begin{equation} \label{eq:MSE der}
\begin{aligned}
    D &= \frac{1}{2K}\sum_{i_1=0}^{\mathfrak{R}_1-1} \int_{\mathbf{y}_1 \in \mathcal{R}_{i_1}} \bigg\{ \overbrace{\mathbb{E}[\|\mathbf{X}_1 - \textsf{D}_1(J_1,J_2)\|_2^2 | \mathbf{y}_1,i_1]}^{\triangleq D_1(\mathbf{y}_1,i_1)}  &\\
    &\hspace{0.5cm} +  \underbrace{\mathbb{E}[\|\mathbf{X}_2 - \textsf{D}_2(J_1,J_2)\|_2^2 | \mathbf{y}_1,i_1]}_{\triangleq D_2(\mathbf{y}_1,i_1)}  \bigg\} p(\mathbf{y}_1) d\mathbf{y}_1,&
\end{aligned}
\end{equation}
where $p(\mathbf{y}_1)$ is the $M$-fold pdf of the measurement vector $\mathbf{Y}_1$. Since $p(\mathbf{y}_1)$ is a non-negative value, in order to optimize the mapping $\textsf{E}_1$ in the sense of minimizing $D$, it suffices to minimize the expression inside the braces in \eqref{eq:MSE der}. Thus, the optimal encoding index $i_1^\star$ is obtained by
\begin{equation} \label{eq:min index 1}
    i_1^\star = \text{arg }\underset{i_1 \in \mathcal{I}_1}{\text{min }} \Big\{ D_1(\mathbf{y}_1,i_1) + D_2(\mathbf{y}_1,i_1) \Big\}.
\end{equation}
Now, $D_1(\mathbf{y}_1,i_1)$ can be rewritten as \eqref{eq:D1}, on top of next page, where $(a)$ follows from marginalizing of the conditional expectation over $j_1$ and $j_2$ and using Markov property $J_2 \rightarrow \mathbf{Y}_1 \rightarrow I_1 \rightarrow J_1$. Also, $(b)$ follows by expanding the conditional expectation and the fact that $\mathbf{X}_1$ and $\textsf{D}_1(J_1,J_2)$ are independent conditioned on $\mathbf{y}_1,i_1,j_1,j_2$. Further, $(c)$ follows from marginalization of the expression inside the braces in $(b)$ over $i_2$ and $\mathbf{y}_2$.
\begin{figure*}[!t]
\normalsize
\setcounter{MYtempeqncnt}{\value{equation}}
\begin{equation} \label{eq:D1}
\begin{aligned}
  D_1(\mathbf{y}_1,i_1) &\triangleq \mathbb{E}[\|\mathbf{X}_1 - \textsf{D}_1(J_1,J_2)\|_2^2 | \mathbf{y}_1,i_1]&\\
    &\stackrel{(a)}{=} \sum_{j_1=0}^{\mathfrak{R}_1-1} \sum_{j_2=0}^{\mathfrak{R}_2-1} P(j_1|i_1) P(j_2|\mathbf{y}_1)\mathbb{E}[\|\mathbf{X}_1 - \textsf{D}_1(J_1,J_2)\|_2^2 | \mathbf{y}_1,i_1,j_1,j_2]& \\
    &\stackrel{(b)}{=} \mathbb{E}[\|\mathbf{X}_1\|_2^2|\mathbf{y}_1] + \sum_{j_1=0}^{\mathfrak{R}_1-1} \sum_{j_2=0}^{\mathfrak{R}_2-1} P(j_1|i_1) \Big\{P(j_2|\mathbf{y}_1) \left[\|\textsf{D}_1(j_1,j_2)\|_2^2  - 2  \mathbb{E}[\mathbf{X}_1^\top | \mathbf{y}_1,j_2] \textsf{D}_1(j_1,j_2) \right] \Big\}& \\
    &\stackrel{(c)}{=} \mathbb{E}[\|\mathbf{X}_1\|_2^2|\mathbf{y}_1] + \sum_{j_1=0}^{\mathfrak{R}_1-1} \sum_{j_2=0}^{\mathfrak{R}_2-1} P(j_1|i_1) \sum_{i_2=0}^{\mathfrak{R}_2-1} P(j_2|i_2)  \left[\|\textsf{D}_1(j_1,j_2)\|_2^2 \int_{\mathbf{y}_2 \in \mathcal{R}^{i_2}} p(\mathbf{y}_2|\mathbf{y}_1) d\mathbf{y}_2 \right.&\\
    & \left. -2 \int_{\mathbf{y}_2 \in \mathcal{R}^{i_2}} \mathbb{E}[\mathbf{X}_1^\top|\mathbf{y}_1, \mathbf{y}_2] \textsf{D}_1(j_1,j_2) p(\mathbf{y}_2|\mathbf{y}_1) d\mathbf{y}_2 \right]&
\end{aligned}
\end{equation}
\setcounter{equation}{\value{MYtempeqncnt}}
\hrulefill
\end{figure*}
\setcounter{equation}{18}

In a same fashion, $D_2(\mathbf{y}_1,i_1)$ can be parameterized similar to \eqref{eq:D1} with the only difference that $\mathbf{X}_1$ and $\textsf{D}_1(j_1,j_2)$ are replaced with $\mathbf{X}_2$ and $\textsf{D}_2(j_1,j_2)$, respectively. Following \eqref{eq:min index 1} and \eqref{eq:D1}, the MSE-minimizing encoding index, denoted by $i_1^\star$, is given by \eqref{eq:final enc1},
\begin{figure*}[!t]
\normalsize
\setcounter{MYtempeqncnt}{\value{equation}}
\begin{equation} \label{eq:final enc1}
\begin{aligned}
    i_1^\star &\!= \! \text{arg }\underset{i_1 \in \mathcal{I}_1}{\text{min }} \left\{ \sum_{j_1=0}^{\mathfrak{R}_1-1} \sum_{j_2=0}^{\mathfrak{R}_2-1} \sum_{i_2=0}^{\mathfrak{R}_2-1}  P(j_1|i_1) P(j_2|i_2) \int_{\mathcal{R}^{i_2}} \left[\|\textsf{D}(j_1,j_2)\|_2^2 \!-\! 2 \widetilde{\mathbf{x}}^\star(\mathbf{y}_1,\mathbf{y}_2)^{\!\top} \textsf{D}(j_1,j_2) \right] p(\mathbf{y}_2|\mathbf{y}_1) d\mathbf{y}_2 \right\},
\end{aligned}
\end{equation}
\setcounter{equation}{\value{MYtempeqncnt}}
\hrulefill
\end{figure*}
\setcounter{equation}{19}
where $\textsf{D}(j_1,j_2)$ and $\widetilde{\mathbf{x}}^\star(\mathbf{y}_1,\mathbf{y}_2)$ denote
\begin{equation*}
\begin{aligned}
    \textsf{D}(j_1,j_2) &\triangleq& \left[\textsf{D}_1(j_1,j_2)^\top \hspace{0.15cm} \textsf{D}_2(j_1,j_2)^\top \right]^\top  \in \mathbb{R}^{2N}, \\
    \widetilde{\mathbf{x}}^\star(\mathbf{y}_1,\mathbf{y}_2) &\triangleq& \left[\widetilde{\mathbf{x}}_1^\star(\mathbf{y}_1,\mathbf{y}_2)^\top \hspace{0.15cm} \widetilde{\mathbf{x}}_2^\star(\mathbf{y}_1,\mathbf{y}_2)^\top \right]^\top  \in \mathbb{R}^{2N}.
\end{aligned}
\end{equation*}
Note that the codevectors $\textsf{D}_1(j_1,j_2)$ and $\textsf{D}_2(j_1,j_2)$ are given, and the vectors $\widetilde{\mathbf{x}}_1^\star(\mathbf{y}_1,\mathbf{y}_2) = \mathbb{E}[\mathbf{X}_1 | \mathbf{y}_1,\mathbf{y}_2]$ and $\widetilde{\mathbf{x}}_2^\star(\mathbf{y}_1,\mathbf{y}_2) = \mathbb{E}[\mathbf{X}_2 | \mathbf{y}_1,\mathbf{y}_2]$ denote the MMSE estimators that, under \assref{ass1}, are derived in \proref{theo1}. Also, the conditional pdf $p(\mathbf{y}_2|\mathbf{y}_1)$ is given by \eqref{cond prob} in \proref{theo1-2} under \assref{ass1}. It should be mentioned that although the observation at terminal $2$, $\mathbf{y}_2$, appears in the formulation of the optimized encoder at terminal $1$, i.e, in \eqref{eq:final enc1}, it is finally integrated out.

The following remark considers the case in which sources are uncorrelated.
\begin{rem} \label{rem:dist_enc_p2p}
    When there is no correlation between sources ($\rho \rightarrow 0$), then $\mathbf{Y}_1$ and $\mathbf{Y}_2$ become independent of each other. Consequently, $J_1$ becomes independent of $J_2$, and we have the following Markov chains $\mathbf{X}_l \rightarrow \mathbf{Y}_l \rightarrow \mathbf{Y}_{l'}$ and $\mathbf{X}_l \rightarrow I_l \rightarrow I_{l'}$ ($\forall l,l' \in \{1,2\} , l \neq l'$). Then,  it is straightforward to show that the optimized encoding index \eqref{eq:final enc1} boils down to
    \begin{equation} \label{eq:final enc simple}
    i_1^\star \!\stackrel{ \rho \!\rightarrow  0}{=}\! \textrm{arg }\underset{i_1 \in \! \mathcal{I}_1}{\textrm{min}} \left\{ \! \sum_{j_1=0}^{\mathfrak{R}_1-1} \!  \! \! P(j_1|i_1) \left( \left\| \textsf{D}_{1}(j_1) \right\|_2^2 \!-\! 2 \widetilde{\mathbf{x}}_1^\star(\mathbf{y}_1)^{\!\top} \textsf{D}_1(j_1)\right) \! \right\}
\end{equation}
    which is the optimized encoding index for the point-to-point vector quantization of CS measurements over a noisy channel, cf. \cite[eq. (7)]{13:Shirazinia_ICASSP}.
\end{rem}

\subsection{Decoder Design} \label{subsec:dec design}
Assuming all encoders and decoder $l'$ $(l' \neq l)$ are fixed, the MSE-minimizing decoder is given by
\begin{equation} \label{eq:final dec}
    \textsf{D}_l^\star(j_1,j_2) = \mathbb{E}[\mathbf{X}_l | j_1,j_2],  \hspace{0.1cm} j_l \in \mathcal{I}_l, l \in \{1,2\}.
\end{equation}
Using the Bayes' rule, it follows that
\begin{equation} \label{eq:bayes}
P(i_1,i_2|j_1,j_2) = \frac{P(j_1|i_1)P(j_2|i_2)P(i_1,i_2)}{\sum_{i_1} \sum_{i_2} P(j_1|i_1)P(j_2|i_2)P(i_1,i_2)},
\end{equation}
where $P(i_1,i_2) = \text{Pr}(\mathbf{Y}_1 \in \mathcal{R}^{i_1},\mathbf{Y}_2 \in \mathcal{R}^{i_2})$. Now, marginalizing the conditional expectation \eqref{eq:final dec} over $i_1$ and $i_2$ and applying \eqref{eq:bayes}, we obtain $\textsf{D}_l^\star(j_1,j_2)=$
\begin{equation} \label{eq:param dec1}
\begin{aligned}
      \frac{\sum_{i_1,i_2} P(j_1|i_1) P(j_2|i_2) \int_{\mathcal{R}^{i_1}} \int_{\mathcal{R}^{i_2}} \widetilde{\mathbf{x}}_l^\star (\mathbf{y}_1,\mathbf{y}_2)  p(\mathbf{y}_1,\mathbf{y}_2) d\mathbf{y}_1 d \mathbf{y}_2}{\sum_{i_1,i_2} P(j_1|i_1) P(j_2|i_2) \int_{\mathcal{R}^{i_1}} \int_{\mathcal{R}^{i_2}} p(\mathbf{y}_1,\mathbf{y}_2) d\mathbf{y}_1 d \mathbf{y}_2}.
\end{aligned}
\end{equation}

We note the following remark regarding the case where the sources are uncorrelated.
\begin{rem}
    In a scenario where the sources are uncorrelated ($\rho \rightarrow 0$), due to the same reasoning stated in \remref{rem:dist_enc_p2p}, the optimized codevectors in the studied distributed scenario, i.e., \eqref{eq:param dec1}, boil down to
    \begin{equation} \label{eq:param dec1 simple}
        \textsf{D}_l^\star(j_l) \stackrel{ \rho \!\rightarrow  0}{=} \frac{\sum_{i_l} P(j_l|i_l) \int_{\mathcal{R}^{i_l}} \widetilde{\mathbf{x}}_l^\star (\mathbf{y}_l)  p(\mathbf{y}_l) d\mathbf{y}_l}{\sum_{i_l} P(j_l|i_l) \int_{\mathcal{R}^{i_l}}  p(\mathbf{y}_l)  d \mathbf{y}_l}.
    \end{equation}
    which is the optimized codevectors for the point-to-point vector quantization of CS measurements over a noisy channel, cf. \cite[eq. (11)]{13:Shirazinia_ICASSP}.
\end{rem}

\subsection{Training Algorithm} \label{sec:training}
In this section, we develop a practical VQ encoder-decoder training algorithm for the studied distributed system.

The necessary optimal conditions for the encoder in \eqref{eq:final enc1} (and its equivalence $i_2^\star$) and the decoder in \eqref{eq:param dec1} can be combined in an \textit{alternate-iterate} procedure in order to design distributed VQ encoder-decoder pairs for CS. We choose the order to optimize the mappings as: \textit{1) the first encoder, 2) the first decoder, 3) the second encoder and 4) the second decoder} as shown in \algref{alg:Lloyd_Dist}.

\begin{algorithm}
\caption{: Training algorithm for distributed vector quantization of CS measurements over noisy channels.}\label{alg:Lloyd_Dist}
\begin{algorithmic}[1]
\STATE{\textbf{input:} measurement vector $\mathbf{y}_l$, channel probabilities: $P(j_l|i_l)$, quantization rate: $R_l$, $l \in \{1,2\}$}
\STATE{\textbf{initialize: } $\textsf{D}_l$ , $l \in \{1,2\}$}
\REPEAT
    \STATE{Fix the second encoder and the decoders, and find the optimal index for the first encoder using \eqref{eq:final enc1}.}
    \STATE{Fix the encoders and the second decoder, and find the optimal codevectors for the first decoder using \eqref{eq:param dec1}.}
    \STATE{Fix the first encoder and the decoders, and find the optimal index for the second encoder using equivalence of \eqref{eq:final enc1}.}
    \STATE{Fix the encoders and the first decoder, and find the optimal codevectors for the second decoder using \eqref{eq:param dec1}.}
\UNTIL{convergence}
\STATE{\textbf{output: } $\textsf{D}_l$ and $\mathcal{R}_{i_l}$, $l \in \{1,2\}$ }
\end{algorithmic}
\end{algorithm}

The following remarks can be taken into consideration for implementation of \algref{alg:Lloyd_Dist}.
\begin{itemize}
    \item In order to initialize \algref{alg:Lloyd_Dist} in step (2), codevectors for the first and the second decoders might be chosen as sparse random vectors (with known statistics) to mimic the behavior of the sources. Furthermore, the convergence of the algorithm in step (8) may be checked by tracking the MSE, and iterations are terminated when the relative improvement is small enough. By construction and ignoring issues such as numerical precision, the iterative design always converges to a local optimum since when the necessary optimal criteria in steps (4)-(7) of \algref{alg:Lloyd_Dist} are invoked, the performance can only leave unchanged or improved, given the updated indexes and codevectors. This is a common rationale behind the proof of convergence for such iterative algorithms (see e.g. \cite[Lemma 11.3.1]{91:Gersho}).

       \item In step (4) of \algref{alg:Lloyd_Dist}, we need to compute the integral in \eqref{eq:final enc1} which consists of the MMSE estimator \eqref{eq:MMSE closed} and the conditional probability \eqref{cond prob}. The expressions for these parameters are derived analytically in \proref{theo1} and \proref{theo1-2} under \assref{ass1}. However, the integral in \eqref{eq:final enc1} cannot be solved in closed form, and requires approximation.
           Let us focus on evaluating the integral in \eqref{eq:final enc1}. We rewrite the integral as
            \begin{equation} \label{eq:modify enc1}
            \begin{aligned}
                &\int_{\mathcal{R}^{i_2}} \! \left[\|\textsf{D}(j_1,j_2)\|_2^2 \!-\! 2 \widetilde{\mathbf{x}}^\star(\mathbf{y}_1,\mathbf{y}_2)^\top \textsf{D}(j_1,j_2) \right] p(\mathbf{y}_2|\mathbf{y}_1) d\mathbf{y}_2& \\
                &= \! \|\textsf{D}(j_1,j_2)\|_2^2 P(i_2 | {\mathbf{y}}_1) \!-\! 2 P(i_2 | {\mathbf{y}}_1) \mathbb{E} [\mathbf{X}^\top | {\mathbf{y}}_1 , i_2] \textsf{D}(j_1,j_2)& \\
                &\approx \! \|\textsf{D}(j_1,j_2)\|_2^2 P(i_2 | \check{\mathbf{y}}_1) \!-\! 2 P(i_2 | \check{\mathbf{y}}_1) \mathbb{E} [\mathbf{X}^\top | \check{\mathbf{y}}_1 , i_2] \textsf{D}(j_1,j_2)&
            \end{aligned}
            \end{equation}
            where we have approximated $\mathbf{y}_1$ in $M-$dimensional continuous space with a $M-$dimensional vector $\check{\mathbf{y}}_1$ belongs to a discrete space. This is performed by scalar-quantizing each entry of $\mathbf{y}_1$ using $r_y$-bit nearest-neighbor coding. Here, $r_y$ denotes the number of quantization bits per measurement entry, and determines the resolution of the measurements. For simplicity of implementation, we use the codeponits optimized for a Gaussian RV (with zero mean and variance $K/M$) for each measurement entry using the LBG algorithm \cite{80:LBG}. Hence, $\mathbf{y}_1$ is discretized using this pre-quantization method. Also, $P(i_2 | \check{\mathbf{y}}_1) \triangleq \text{Pr}\{I_2 = i_2 | \check{\mathbf{Y}}_1 = \check{\mathbf{y}}_1\}$ indicates a transition probability that can be calculated by counting the number of transitions from $\check{\mathbf{y}}_1$ to $i_2$ over total occurrences of $\check{\mathbf{y}}_1$'s. Note that this probability can be computed off-line and be available at the first encoder. In order to evaluate the conditional mean $\mathbb{E}[\mathbf{X}^\top | \check{\mathbf{y}}_1 , i_2]$ in \eqref{eq:modify enc1}, we generate samples of $\mathbf{X}_1$ and $\mathbf{X}_2$, and then take average over those samples that have resulted in the quantized value $\check{\mathbf{y}}_1$ and the quantization index $i_2$. Using this trick, the conditional mean is replaced by a look-up table that can be calculated off-line and stored. Here, we emphasize that the value of $i_2$ in online phase of quantization is not required at the first terminal since it is summed out because of the summation over $i_2$ in \eqref{eq:final enc1}. For all practical purpose, the approximation in \eqref{eq:modify enc1} is used instead of the integral in \eqref{eq:final enc1}. Also, note that we can use the same modification, discussed above, in step (6) of \algref{alg:Lloyd_Dist}.

            \item Using the discussed modifications, for a encoding given index $i_l$ and the pre-quantized value $\check{\mathbf{y}}_l$ ($l \in \{1,2\}$), the encoder computational complexity grows at most like $\mathcal{O}(2^{R_1 + R_2})$. We stress that, in this paper, we used VQ at each terminal since it is theoretically the optimal coding strategy for a block (vector). Therefore, we have not sacrificed performance to reduce complexity, which is not the scope of the current work. However, using structured quantizers, such as tree-structured VQ and multi-stage VQ \cite{91:Gersho,93:Phamdo}, the encoding complexity of VQ can be reduced, but this is achieved at the expense of further performance degradation.

            \item In steps (5) and (7) of \algref{alg:Lloyd_Dist}, we need to compute the codevectors \eqref{eq:param dec1}. Note that although $\mathbb{E}[\mathbf{X}_l | j_1,j_2]$, $l \in \{1,2\}$, can be calculated analytically from \eqref{eq:param dec1}, it requires massive integrations of non-linear functions. Therefore, we calculate $\mathbb{E}[\mathbf{X}_l | j_1,j_2]$ empirically by generating Monte-Carlo samples of $\mathbf{X}_l$, and then take average over those samples which have led to the noisy quantized indexes $j_1$ and $j_2$.
\end{itemize}

In the next section, we offer insights into the performance characteristics of the distributed system shown in \figref{fig:diagram_dist}.

\section{Analysis of MSE} \label{sec:analysis}
We can rewrite the end-to-end MSE, in \eqref{eq:MSE}, as
\begin{equation} \label{eq:MSE decomp}
\begin{aligned}
    D &\stackrel{(a)}{=} \! \frac{1}{2K}\sum_{l=1}^2 \mathbb{E}[\|\mathbf{X}_l - \widetilde{\mathbf{X}}_l^\star\|_2^2] + \frac{1}{2K}\sum_{l=1}^2 \mathbb{E}[\|\widetilde{\mathbf{X}}_l^\star - \widehat{\mathbf{X}}_l\|_2^2]&\\
    &\triangleq D_{cs} + D_q,&
\end{aligned}
\end{equation}
where $\widetilde{\mathbf{X}}_l^\star \triangleq \mathbb{E}[\mathbf{X}_l|\mathbf{Y}]$ denotes a RV representing the MMSE estimator, and $D_{cs}$ and $D_q$, respectively, denote the CS distortion (MSE) and quantized transmission distortion (MSE). In \eqref{eq:MSE decomp}, $(a)$  holds due to orthogonality of CS reconstruction error (i.e., $\mathbf{X}_l - \widetilde{\mathbf{X}}_l^\star$) and quantized transmission error (i.e., $\widetilde{\mathbf{X}}_l^\star - \widehat{\mathbf{X}}_l$). This can be shown based on the definition of the MMSE estimator $\widetilde{\mathbf{X}}_l^\star$ and the Markov property $\mathbf{X}_l \rightarrow (J_1,J_2) \rightarrow \widehat{\mathbf{X}}_l$, $l \in \{1,2\}$. Next, we use \eqref{eq:MSE decomp} in order to develop a lower-bound on $D$.

\begin{theo} [\textit{Lower-bound on end-to-end MSE}] \label{theo2}
Consider the two-terminal distributed system in \figref{fig:diagram_dist} under \assref{ass1}. Let the total quantization rate be $R = R_1 + R_2$ bits/vector, and the correlation ratio between sources be $\rho$. Then the asymptotic (in quantization rate) end-to-end MSE \eqref{eq:MSE} is lower-bounded as
\begin{equation} \label{eq:total lb}
\begin{aligned}
	D >  \max \left\{D_q^{(or)} , D_{cs}^{(or)} \right\},
\end{aligned}
\end{equation}
where $D_{q}^{(or)} =$
\begin{equation} \label{eq:lb sum}
\begin{aligned}
	&   \sqrt{\left(1- \frac{\rho^2}{(1+\rho)^2}\right) 2^{\frac{-2\left(R -  \log_2 {N \choose K}\right)}{K}} + \frac{\rho^2}{(1+\rho)^2} 2^{\frac{-4\left(R -  \log_2 {N \choose K}\right)}{K}}},&
\end{aligned}
\end{equation}
and $D_{cs}^{(or)} =$
\begin{equation} \label{eq:oracle bound acc}
      1 - \frac{1}{2K} \text{Tr}\left\{
    \left[\begin{array}{c c c}
          2\mathbf{I}_K&  \mathbf{I}_K & \mathbf{I}_K \\
         \mathbf{I}_K & \mathbf{I}_K & \mathbf{0}_K \\
        \mathbf{I}_K & \mathbf{I}_K & \mathbf{I}_K \\
    \end{array}\right]\cdot \frac{1}{{N \choose K}}
     \sum_{\mathcal{S}^{(or)} \subset \mathbf{\Omega}}\mathbf{C}^\top \mathbf{D}^{-1} \mathbf{C} \right\},
\end{equation}
where $\mathcal{S}^{(or)}$ is an oracle support in the set $\mathbf{\Omega}$, and the matrices $\mathbf{C}$ and $\mathbf{D}$ are specified by \eqref{eq:accessories1} and \eqref{eq:accessories2}, respectively.
\end{theo}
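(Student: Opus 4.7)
The plan is to exploit the orthogonal decomposition $D = D_{cs} + D_q$ established in (22), which cleanly separates the irreducible MMSE-estimation loss $D_{cs}$ from the additional distortion $D_q$ introduced by finite-rate distributed quantization and channel noise. Because both terms are averages of squared norms, each is non-negative, so $D \geq \max\{D_{cs},\,D_q\}$. It therefore suffices to establish the two lower bounds $D_{cs}\geq D_{cs}^{(or)}$ and $D_q \geq D_q^{(or)}$ separately and then invoke the max on the right-hand side.

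The first inequality is a direct consequence of Proposition 2: granting an oracle the true support $\mathcal{S}$ can only decrease the estimation distortion, so the conditional-Gaussian MMSE computed with $\mathcal{S}$ in hand lower-bounds the distortion of any estimator that must marginalize over $\mathcal{S}$. No work beyond the cited proposition is needed here.

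For the second inequality I would proceed by a genie-plus-rate-splitting argument followed by an asymptotic distributed-source-coding converse. Even if the decoder is handed the support for free, any scheme must still pay $\log_2\binom{N}{K}$ bits to identify which of the $\binom{N}{K}$ equally-likely supports actually occurred, so at most $R - \log_2\binom{N}{K}$ bits remain for encoding the $K$-dimensional jointly-Gaussian coefficient vectors $\mathbf{X}_{1,\mathcal{S}},\mathbf{X}_{2,\mathcal{S}}$. Under Assumption~1 these two vectors have unit-variance entries with pairwise correlation coefficient $\rho_X \triangleq \rho/(1+\rho)$. Writing $r \triangleq (R-\log_2\binom{N}{K})/K$ and letting $d,c$ denote respectively the diagonal and off-diagonal entries of the asymptotic per-component-pair error covariance realized by the distributed scheme, I would then invoke the quadratic-Gaussian two-terminal converse (Oohama/Wagner-type). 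The $2\times 2$ matrix Shannon lower bound $R \geq \tfrac{K}{2}\log(\det\Sigma_0/\det D_0)$ per pair yields $d^2 - c^2 \geq (1-\rho_X^2)\,2^{-2r}$, while a Wyner--Ziv-type side-information bound (each terminal sees its own link only) controls $|c|$ from below by $\rho_X\,2^{-2r}$ in the high-rate limit. Combining the two inequalities produces $d^2 \geq (1-\rho_X^2)\,2^{-2r} + \rho_X^2\,2^{-4r}$, and taking the square root — after identifying $d$ with the per-component quantization MSE averaged over both terminals — reproduces the stated expression for $D_q^{(or)}$ in (24). Using $\rho_X^2 = \rho^2/(1+\rho)^2$ converts the answer into the form displayed in the theorem.

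The main obstacle will be this last step: pinning down the correct distributed-coding converse so that the two regimes $2^{-2r}$ (the Wyner--Ziv per-terminal penalty) and $2^{-4r}$ (centralized-like exploitation of the common component after pooling both rates) appear with exactly the weights $1-\rho_X^2$ and $\rho_X^2$. Everything else — the non-negativity decomposition, the genie argument for the support, the reduction of the per-component variance to $1$ via the normalization $\sigma_\theta^2+\sigma_z^2=1$, and the trace bookkeeping already carried out in Propositions~1 and~2 — is routine once this Gaussian converse is in hand. The strict inequality in the theorem follows from the fact that, at any finite total rate, the realized distortion cannot simultaneously saturate both the asymptotic quantization converse and the oracle estimation bound.
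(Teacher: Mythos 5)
Your first half matches the paper: $D_{cs}\geq D_{cs}^{(or)}$ via Proposition~2 is exactly how the paper handles that term. The gap is in your second half. You reduce to proving $D_q\geq D_q^{(or)}$, but $D_q=\frac{1}{2K}\sum_l\mathbb{E}[\|\widetilde{\mathbf{X}}_l^\star-\widehat{\mathbf{X}}_l\|_2^2]$ is the distortion incurred in conveying the \emph{MMSE estimate} $\widetilde{\mathbf{X}}^\star=\mathbb{E}[\mathbf{X}\,|\,\mathbf{Y}]$, not the source itself. The Wagner/Oohama two-terminal quadratic-Gaussian converse you invoke is stated for the source statistics (unit-variance components, correlation $\rho/(1+\rho)$), and it does not transfer to $\widetilde{\mathbf{X}}^\star$, whose within-support variance is strictly smaller whenever the measurements are noisy. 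Indeed the inequality $D_q\geq D_q^{(or)}$ is false in general: as the measurement noise grows, $\widetilde{\mathbf{X}}^\star$ collapses toward $\mathbb{E}[\mathbf{X}]=\mathbf{0}$, so $D_q\to 0$ at any rate while $D_q^{(or)}>0$. Moreover, the encoders observe $\mathbf{Y}_l$, not $\mathbf{X}_{l,\mathcal{S}}$, so your genie-plus-rate-splitting step as phrased is bounding a remote (CEO-type) problem with a direct-observation converse attached to the wrong distortion.

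The paper avoids this by lower-bounding $D$ itself, not $D_q$: it compares the actual system with a genie-aided one in which the encoders directly code $\mathbf{X}_1,\mathbf{X}_2$ with the oracle support $\mathcal{S}^{(or)}$ known (this can only improve performance), spends $\log_2 {N \choose K}$ bits on the support, and then applies Wagner's two-terminal converse to the remaining $R-\log_2{N \choose K}$ bits, using the product bound $D_{1,\mathcal{S}^{(or)}}D_{2,\mathcal{S}^{(or)}}\geq(\cdot)$ together with the symmetric choice $D_{1,\mathcal{S}^{(or)}}=D_{2,\mathcal{S}^{(or)}}$ to get $D\geq D_q^{(or)}$ directly; combined with $D>D_{cs}\geq D_{cs}^{(or)}$ this yields the max. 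Your Gaussian computation (correlation coefficient $\rho/(1+\rho)$, the $2^{-2r}$ and $2^{-4r}$ terms) is the right ingredient, but it must be applied to the end-to-end distortion of that genie-aided direct-coding system, not to $D_q$; also note you need not re-derive the region from Shannon-lower-bound and Wyner--Ziv arguments, since the needed sum-rate/product-distortion converse is exactly \cite[Theorem 1]{08:Wagner}, which the paper cites and simplifies.
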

\begin{proof}
    The proof is given in Appendix \ref{app D}
\end{proof}

The following remarks can be made with reference to the lower-bound \eqref{eq:total lb} in \theoref{theo2}.
\begin{itemize}
    \item The term $D_{cs}^{(or)}$ in \eqref{eq:total lb} is the contribution of the CS distortion of the MMSE estimator $\widetilde{\mathbf{x}}_l^\star(\mathbf{y})$ $(l \in \{1,2\})$ derived in \eqref{eq:MMSE closed}. Further, the term $D_{q}^{(or)}$ reflects the contribution of quantized transmission distortion. When the CS measurements are noisy, it can be verified that as the sum rate $R = R_1 + R_2$ increases, the lower-bound in \eqref{eq:total lb} saturates since $D_q^{(or)}$ decays exponentially, however, $D_{cs}^{(or)}$ becomes constant by quantization rate (see  \figref{fig:MSE_vs_R} later in the numerical experiments). Hence, the end-to-end MSE, $D$, can, at most, approaches an MSE floor equivalent to $D_{cs}^{(or)}$.
     \item When CS measurements are clean, and number of measurements are sufficient such that $D_{cs} = 0$ in \eqref{eq:MSE decomp} (that is $\widetilde{\mathbf{X}}_l^{\star} = \mathbf{X}_l$, $l \in \{1,2\}$), then it can be shown that $D \geq D_q^{(or)}$ (see \eqref{eq:quant dist total} in the proof of \theoref{theo2}). In this case, the end-to-end MSE can asymptotically  decay at most $-6/K$ dB per total bit (i.e., $R$) corresponding to the case where $\rho \rightarrow \infty$. However, if $\rho \rightarrow 0$, the end-to-end MSE cannot decay steeper than $-3/K$ dB per total bit.
     \item The source correlation, in terms of $\rho$, also plays an important role on the level of the lower-bound in \eqref{eq:total lb}. It can be shown that increasing $\rho$ improves CS reconstruction performance $D_{cs}^{(or)}$ (see \figref{fig:Bayes_MMSE} later in the numerical experiments). Further, by taking the first derivative of $D_q^{(or)}$ in \eqref{eq:lb sum} with respect to $\rho$, it can be verified that the derivative is always negative. This means that as the source correlation $\rho$ increases, the lower-bound would decrease. Therefore, the correlation between sources can be useful in order to reduce the lower-bound and total distortion. This behavior can be also seen from simulation results in the next section.
\end{itemize}

\section{Numerical Experiments} \label{sec:numerical}
In this section, we first give experimental steps, and then show the simulation and analytical results.

\subsection{Experimental Setups} \label{sec:setup}

The sources $\mathbf{X}_1$ and $\mathbf{X}_2$ are generated randomly according to \assref{ass1}. The correlation ratio is also adjusted by \eqref{eq:corr ratio}.

For the purpose of reproducible research, and due to the reason that structured deterministic sensing matrices are practically implementable, due to hardware considerations, rather than random sensing matrices, we choose a deterministic construction for the sensing matrices \cite{11:Duarte}. More specifically, the sensing matrices $\mathbf{\Phi}_1$ and $\mathbf{\Phi}_2$ are produced by choosing the first (indexed from the first row downwards) and the last (indexed from the last row upwards) $M$ rows of a $N \times N$ discrete cosine transform (DCT) matrix. Then, the columns of the resulting matrices are normalized to unit-norm. Note that once the sensing matrix is generated, it remains fixed. Although the sensing matrices are deterministic, we believe that simulation trends are the same for random matrix generation.

In order to measure the level of under-sampling, we define the \textit{measurement rate} $0 < \alpha < 1$ as
\begin{equation*}
    \alpha \triangleq M/N.
\end{equation*}
Assuming Gaussian measurement noise vector, we define the signal-to-measurement noise ratio (SMNR) at terminal $l \in \{1,2\}$ as
\begin{equation*} \label{eq:SMNR}
    \text{SMNR}_l \triangleq \frac{\mathbb{E}[\|\mathbf{X}_l\|_2^2]}{\mathbb{E}[\|\mathbf{W}_l\|_2^2]} = \frac{K}{M\sigma_{w_l}^2}.
\end{equation*}

For the simulation results associated with noisy channels, we implement a binary symmetric channel (BSC) with bit cross-over probability $0 \leq \epsilon \leq 0.5$ specified by transition probability
\begin{equation} \label{eq:BSC}
     P(j | i) = \epsilon^{H_\mathfrak{R}(i,j)} (1 - \epsilon)^{\mathfrak{R} - H_\mathfrak{R}(i,j)},
\end{equation}
where $\epsilon$ represents bit cross-over probability (assumed known), and $H_\mathfrak{R}(i,j)$ denotes the Hamming distance between $\mathfrak{R}$-bit binary codewords representing the channel input and output indexes $i$ and $j$.

\subsection{Experimental Results} \label{sec:results}
In the following, through numerical experiments, we first offer insights regarding the impact of correlation between sources $\rho$ and measurement rate $\alpha$ on the CS distortion. Then, using the distributed design, we evaluate the end-to-end performance in terms of correlation ratio $\rho$, compression resources (measurement rate $\alpha$ and quantization rate $R$), and channel bit cross-over probability $\epsilon$.

\subsubsection{CS Distortion}
The performance is tested using the CS distortion criterion, $D_{cs}$ in \eqref{eq:CS distotion}. We set the source dimension to $N=10$, sparsity level $K=2$ and $\text{SMNR}_l = 10$ dB ($l \in \{1,2\}$). We randomly generate $2 \times 10^4$ samples of each sparse source vector, and empirically compute $D_{cs}$ in \eqref{eq:CS distotion} using the MMSE estimator \eqref{eq:MMSE closed}. The results are illustrated in \figref{fig:Bayes_MMSE} as a function of the correlation ratio $\rho$ for different values of measurement rate, i.e., $\alpha=\frac{3}{10}, \frac{4}{10}, \frac{5}{10}, \frac{6}{10}$. The analytical lower-bound in \eqref{eq:oracle bound acc} corresponding to the measurement rate $\alpha = \frac{6}{10}$ is also demonstrated. From \figref{fig:Bayes_MMSE}, we observe that increasing number of CS measurements improve the performance which is expected since the sources are estimated from more observations. Another interesting point is that $D_{cs}$ varies significantly by changing the correlation ratio $\rho$; for example, there is $4.5$ dB performance improvement (corresponding to the curve $\alpha=\frac{6}{10}$) from the case where the sources are almost uncorrelated ($\rho = 10^{-3}$) to the one that they are highly correlated ($\rho = 10^{3}$). This behavior is reflected from the oracle lower-bound as well. This is due to the fact that at low correlation, the measurement vectors become uncorrelated, therefore there is no gain obtained by, e.g., estimation of $\mathbf{X}_1$ from observations at the second terminal, i.e., $\mathbf{y}_2$. On the other hand, when the sources are highly correlated, the estimation procedure tends to estimating a single source $\mathbf{\Theta}$ from $2M$ number of observations, i.e., $\mathbf{y}_1$ and $\mathbf{y}_2$. Finally, it should be noted that the gap between the curve corresponding to $\alpha = \frac{6}{10}$ and the
oracle lower-bound is due to imperfect knowledge of exact support set.
\begin{figure}
    \begin{center}
     \includegraphics[width=\columnwidth,height=8cm]{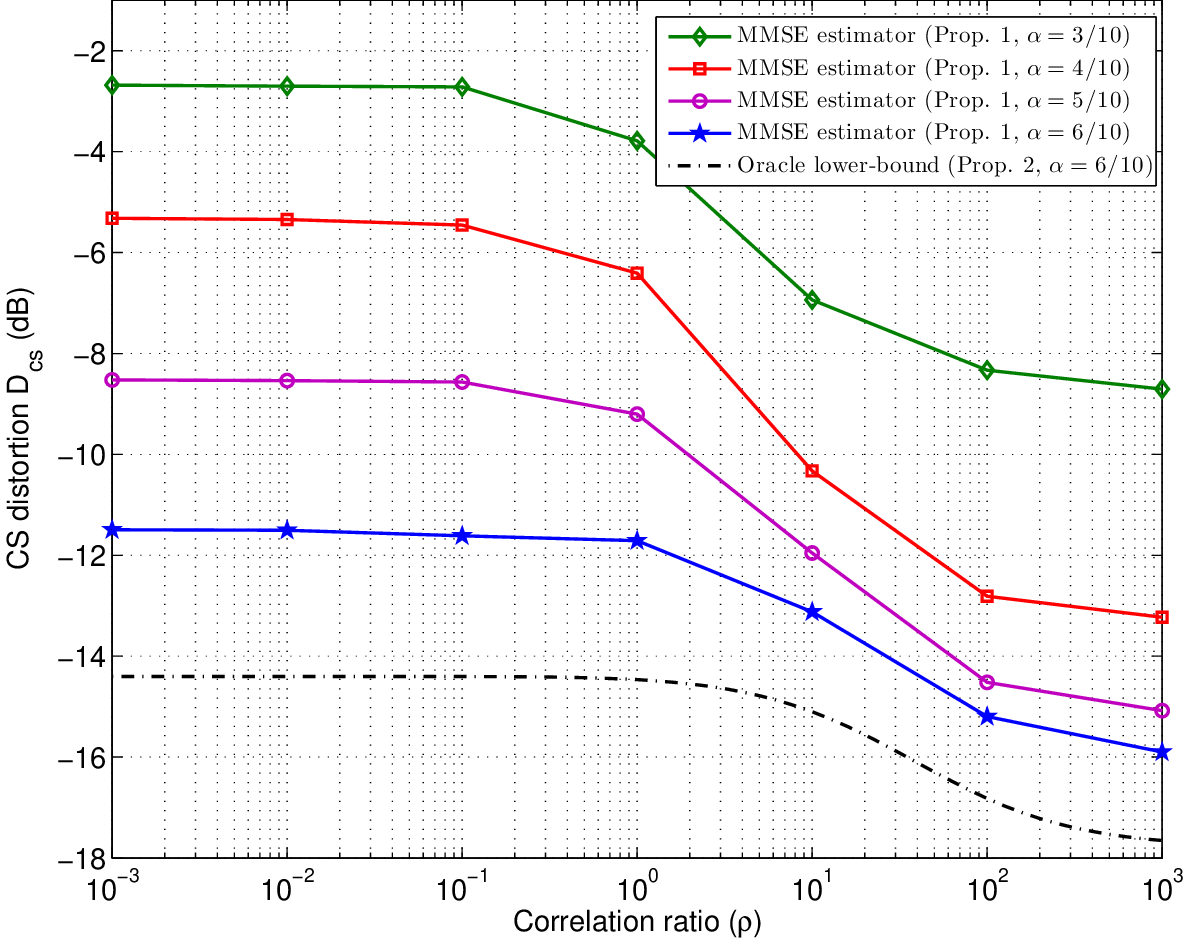}\\
     \caption{CS distortion $D_{cs}$ (in dB) vs. correlation ratio $\rho$. The parameters are chosen as $N=10$, $K=2$, and $\text{SMNR}_1 = \text{SMNR}_2 = 10 \text{dB}$.}
     \label{fig:Bayes_MMSE}
    \end{center}
\end{figure}

\subsubsection{End-to-end Distortion}

The performance is now tested using the end-to-end MSE, $D$ in \eqref{eq:MSE}. It is well-known that VQ is theoretically the optimal block coding strategy, but it suffers in exponential complexity with the dimension of source and bit rate per sample. Therefore, in our simulations, we are compelled to use a low-dimensional setup.\footnote[1]{At this point, readers are reminded that small dimensions and rates are required in order to enable the computations of the full-search VQ. For higher dimension and rate, the usual approach is to design sub-optimal structured VQ, such as multistage VQ, tree-structured VQ, etc. This paper is our first coordinated effort to bring channel-robust VQ and CS together in a distributed setup. To deal with complexity, the design of structured VQ in this setup remains open for further research.} All simulations, both in training and in performance evaluation, are performed by using $3 \times 10^5$ realizations of the source vectors. Further, the vectors $\mathbf{y}_1$ and $\mathbf{y}_2$ are pre-quantized (as discussed in \secref{sec:training}) using $r_y=3$ bits per measurement entry in order to obtain $\check{\mathbf{y}}_1$ and $\check{\mathbf{y}}_2$, respectively.

In our first experiment, we demonstrate the effect of source correlation $\rho$ and measurement rate $\alpha$ on the performance. We use the simulation parameter set ($N = 10, K = 2, R = R_1 + R_2 = 10$ bits/vector with $R_1 = R_2$), and assume noiseless communication channels and clean measurements. We vary the correlation ratio from very low $(\rho = 10^{-3})$ to very high values $(\rho = 10^3)$, and compare the simulation results with the lower-bound (corresponding to the curve $\alpha = \frac{6}{10}$) derived in \eqref{eq:total lb} of \theoref{theo2}. The results are shown in \figref{fig:MSE_rho_10} for various values of measurement rate, i.e., $\alpha = \frac{3}{10}, \frac{4}{10}, \frac{5}{10}, \frac{6}{10}$. From \figref{fig:MSE_rho_10}, we observe that the higher the correlation (i.e., larger $\rho$) is, the better the performance gets. This behavior was previously observed from the curves in \figref{fig:Bayes_MMSE} with the objective of CS distortion. As would be expected, at a fixed quantization rate $R$ and correlation ratio $\rho$, increasing $\alpha$ improves the performance, and the curves approach the lower-bound. In this simulation setup, $D_{cs}^{(or)} \ll D_{q}^{(or)}$ (see \eqref{eq:total lb}), hence, the lower-bound mainly shows the contribution of quantization distortion.

\begin{figure}
  \begin{center}
  \includegraphics[width=\columnwidth,height=8cm]{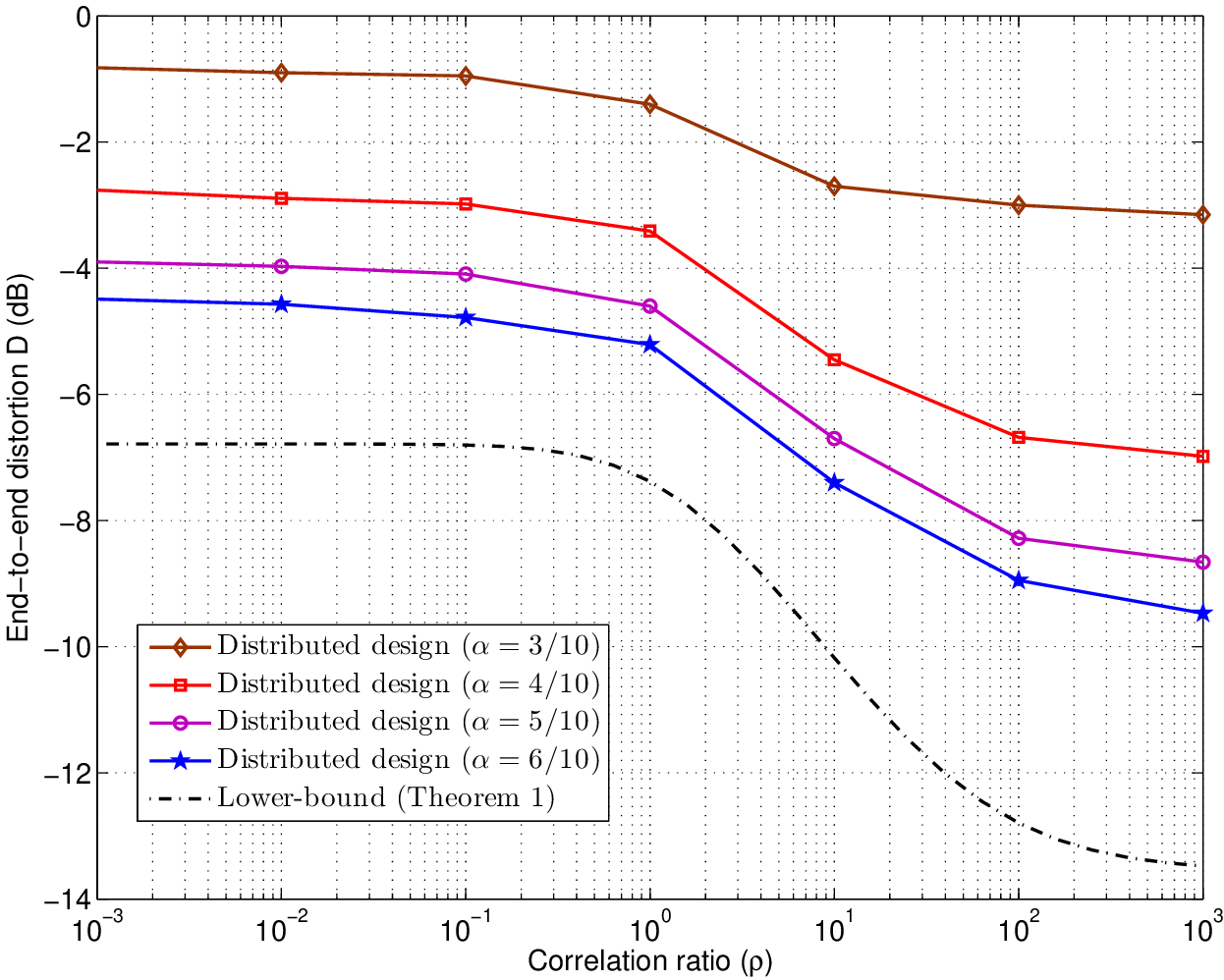}\\
  \caption{End-to-end distortion ($D$ in dB) vs. correlation ratio $\rho$ using the proposed design scheme along with the lower-bound \eqref{eq:total lb} of \theoref{theo2} for different values of measurement rate $\alpha$. The parameters are chosen as $N=10$, $K=2$ and $R = R_1 + R_2 = 10$ bits/vector for clean measurements and noiseless channels.}
  \label{fig:MSE_rho_10}
  \end{center}
\end{figure}

Next, we investigate how the performance varies by quantization rate. We use the simulation parameter set $(N = 10, K = 2, M = 5, \text{ SMNR}_1 = \text{SMNR}_2 = 10 \text{ dB})$, and assume noiseless communication channels. In \figref{fig:MSE_vs_R}, we illustrate the end-to-end MSE of the proposed distributed design method as a function of total quantization rate $R = R_1 + R_2$ (with $R_1 = R_2$) for different values of correlation ratios: $\rho = 1$ (low-correlated sources), $\rho = 10$ (moderately-correlated sources) and $\rho = 10^3$ (highly-correlated sources). The simulation curves are compared with the lower-bound in \eqref{eq:total lb} corresponding to $\rho=1, 10, 10^3$. From \figref{fig:MSE_vs_R}, we observe that the performance improves by increasing quantization rate. Moreover, increasing correlation between sources reduces the MSE as observed from the previous experiments too. The gap between the simulation curves and their respective lower-bounds is due to the reason that when CS measurements are noisy, the MMSE estimators $\widetilde{\mathbf{X}}_l^\star$ $(l \in \{1,2\})$ become far from Gaussian vectors within support. Hence, the simulation curves do not decay as steep as their corresponding lower-bounds which are derived under the optimistic assumption that the source $\mathbf{X}_l$ is available for coding. It should be also noted that as quantization rate increases, all the simulation curves will eventually approach to their respective MSE floors, specified by $D_{cs}$. This is reflected from the lower-bounds in \figref{fig:MSE_vs_R}, where each attains an MSE floor equivalent to $D_{cs}^{(or)} \leq D_{cs}$.

\begin{figure}
  \begin{center}
  \includegraphics[width=\columnwidth,height=8cm]{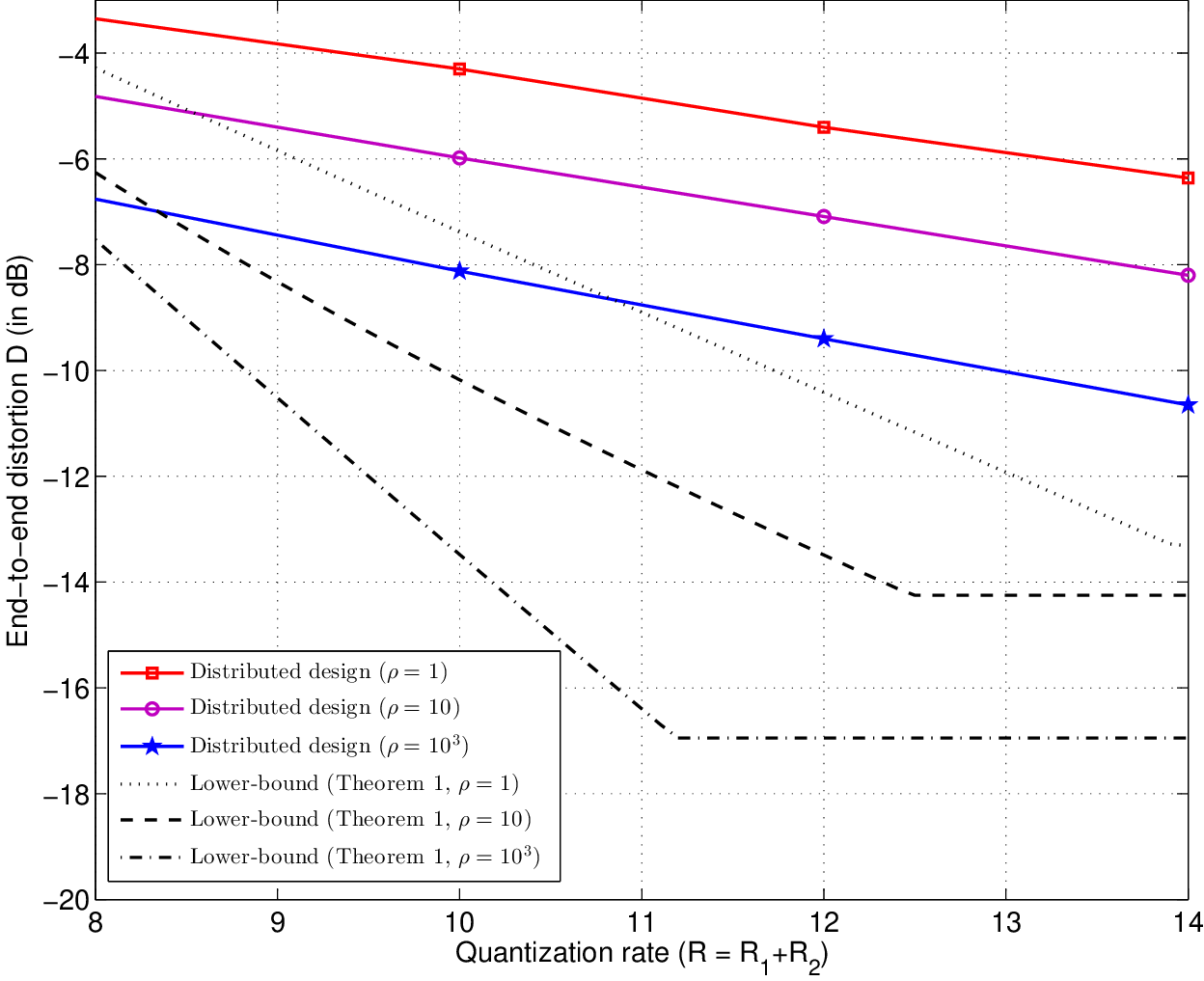}\\
  \caption{End-to-end distortion ($D$ in dB) vs. quantization rate $R = R_1 + R_2$ (in bits/vector) using the proposed design scheme along with the lower-bound \eqref{eq:total lb} for different values of correlation ratio $\rho$. The parameters are chosen as $N=10$, $K=2$ and $M = 5$ for noisy CS measurements, with $\text{SMNR}_1 = \text{SMNR}_2 = 10$ dB, and noiseless channels.}
  \label{fig:MSE_vs_R}
  \end{center}
\end{figure}

In our final experiment, we study the impact of channel noise on the performance. In \figref{fig:MSE_eps}, we assess the MSE of the distributed design as a function of channel bit cross-over probability $\epsilon$ (which is the same for channels at both terminals) using the simulation setup $(N=10, K=2, M=5, R=R_1+ R_2=10 \text{ bits/vector, with } R_1 = R_2)$ and for two values of correlation ratio, $\rho = 1,10^3$. Further, measurement noise is negligible. In order to demonstrate the efficiency of the distributed design scheme, in \figref{fig:MSE_eps}, we also plot the performance of a centralized design of VQ for CS measurements presented in \cite{13:Shirazinia_ICASSP}. A \textit{centralized} scheme provides benchmark performance where the concatenated measurement vector $\mathbf{Y} = [\mathbf{Y}_1^\top \hspace{0.1cm} \mathbf{Y}_1^\top]^\top \in \mathbb{R}^{10}$ is encoded using a VQ encoder with $R = 10$ bits/vector, and the concatenated source $\mathbf{X} = [\mathbf{X}_1^\top \hspace{0.1cm} \mathbf{X}_2^\top]^\top \in \mathbb{R}^{20}$ (with $4$ non-zero coefficients) is reconstructed at the decoder. From the simulation curves in \figref{fig:MSE_eps}, it can be observed that degrading channel condition increases MSE. However, the channel-robust VQ design provides robustness against channel noise by considering channel through its design. At high channel noise, the centralized design provides a slightly more robust performance compared to the distributed design, particularly, for the curve corresponding to low correlation. A potential reason is that the centralized design operates on joint source-channel codes of length $10$ bits, while the distributed design has the encoded index of length $5$ bits at each terminal. However, at high correlation ratio, it can be seen that the performance of the distributed design closely follows that of the centralized approach. By comparing the performance of the distributed design at $\rho=1$ and $\rho=10^3$, it is revealed that correlation between sources is also useful in providing a better performance in noisy channel scenarios. At very high channel noise level, the performance of the designs for $\rho=1$ and $\rho = 10^3$ approaches together. To interpret this behavior, let us consider an extreme case where $\epsilon \rightarrow 0.5$. For a BSC with transition probabilities \eqref{eq:BSC}, this gives $P(j_l | i_l) \rightarrow \left(1/2\right)^{\mathfrak{R}_l} = 2^{-2^{R_l}}, \hspace{0.25cm} \forall i_l,j_l, l \in \{1,2\}$, and according to \eqref{eq:bayes}, this implies that $P(i_1,i_2|j_1,j_2) \rightarrow P(i_1,i_2)$, $\forall j_1,j_2$. Studying \eqref{eq:final dec}, we get $\textsf{D}_l^\star (j_1,j_2) \rightarrow \mathbb{E}[\mathbf{X}_l]$, $\forall j_l, l\in \{1,2\}$. This means that all the codevectors become equal. Further, studying the expression \eqref{eq:final enc1} for the optimized encoding index, we obtain
\begin{equation*}
    i_l^\star \! \rightarrow \text{arg }\underset{i_l \in \mathcal{I}_l}{\text{min }} \left\{\|\mathbb{E}[\mathbf{X}_l]\|_2^2 - 2\mathbb{E}[\mathbf{X}_l^\top | \mathbf{y}_l] \mathbb{E}[\mathbf{X}_l] \right\}, \forall j_l, l \in \{1,2\}.
\end{equation*}
This implies that we have only one non-empty encoding region. Hence, at very high channel noise, only one index is transmitted -- irrespective of the input $\mathbf{X}_l$ and correlation between sources -- and the decoder produces the expected value of the source $\mathbb{E}[\mathbf{X}_l]$ for all received indexes from the channel.

\begin{figure}
  \begin{center}
  \includegraphics[width=\columnwidth,height=8cm]{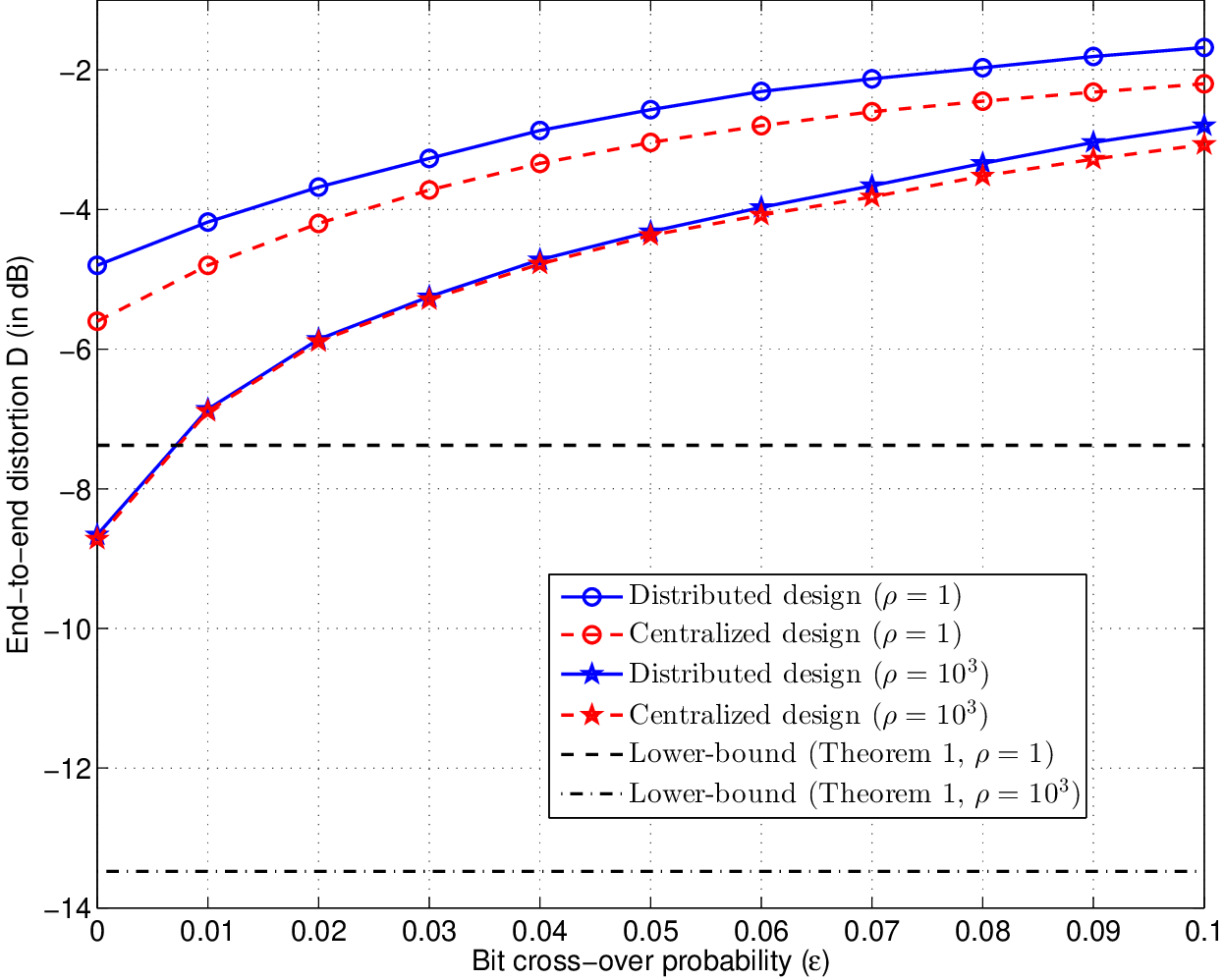}\\
  \caption{End-to-end MSE ($D$ in dB) vs. channel bit cross-over probability $\epsilon$ using the proposed design scheme along with the lower-bound \eqref{eq:total lb} for different values of correlation ratio $\rho$. The parameters are chosen as $N=10$, $K=2$ and $M = 5$ for clean CS measurements, and quantization rate is set to $R = R_1 + R_2 = 10$ bits/vector, with $R_1 = R_2$.}
  \label{fig:MSE_eps}
  \end{center}
\end{figure}

\section{Conclusions and Future Works} \label{sec:conclusion}

We have studied the design and analysis of distributed vector quantization for CS measurements of correlated sparse sources over noisy channels. Necessary conditions for optimality of VQ encoder-decoder pairs have been derived with respect to minimizing end-to-end MSE. We have analyzed the MSE and showed that, without loss of optimality, it is the sum of CS reconstruction MSE and quantized transmission MSE, and we used this fact to derive a lower-bound on the end-to-end MSE. Simulation results have revealed that correlation between sources is an effective factor on the performance in addition to compression resources such as measurement and quantization rates. Further, in noisy channel scenarios, the proposed distributed design method provides robustness against channel noise. In addition, the performance of the distributed design closely follows that of the centralized design.

Finally, we mention that the paper was concerned with full-search VQ schemes suffering from exponential complexity, and hence all experiments were executed with low dimensions. To overcome the complexity issue, a potential future direction is to design sub-optimal structured VQ schemes, such as multistage and tree-structured VQ's for CS in the distributed setup considered in the paper.

\appendices
\section{Proof of \proref{theo1}} \label{app A}

The MMSE estimator that minimizes $D_{cs}$ in \eqref{eq:CS distotion} (given noisy CS measurements $\mathbf{y} = [\mathbf{y}_1^\top \hspace{0.1cm} \mathbf{y}_2^\top]^\top$) is $\widetilde{\mathbf{x}}_l^\star(\mathbf{y}) \triangleq \mathbb{E}[\mathbf{X}_l | \mathbf{y}]$ ($l \in \{1,2\}$). Marginalizing over all supports in $\mathbf{\Omega}$, we have
\begin{equation} \label{eq:margin mmse tot}
    \widetilde{\mathbf{x}}_l^\star(\mathbf{y}) = \sum_{\mathcal{S}\subset \mathbf{\Omega}} p(\mathcal{S}|\mathbf{y}) \mathbb{E}[\mathbf{X}_l|\mathbf{y},\mathcal{S}].
\end{equation}

Then, we note the following linear relation
\begin{equation} \label{eq:big matrix}
 \left[
      \begin{array}{c}
        \mathbf{Y}_1 \\
         \mathbf{Y}_2 \\
         \mathbf{\Theta}_\mathcal{S} \\
         \mathbf{Z}_{1,\mathcal{S}} \\
         \mathbf{Z}_{2,\mathcal{S}}
      \end{array}
    \right] =
    \left[
      \begin{array}{c c c c c}
        \mathbf{\Phi}_{1,\mathcal{S}} & \mathbf{\Phi}_{1,\mathcal{S}} & \mathbf{0} & \mathbf{I} & \mathbf{0}\\
         \mathbf{\Phi}_{2,\mathcal{S}} & \mathbf{0} & \mathbf{\Phi}_{2,\mathcal{S}} & \mathbf{0} & \mathbf{I} \\
         \mathbf{I} & \mathbf{0} & \mathbf{0} & \mathbf{0} & \mathbf{0} \\
         \mathbf{0}& \mathbf{I} & \mathbf{0} & \mathbf{0} & \mathbf{0} \\
         \mathbf{0} & \mathbf{0} & \mathbf{I} & \mathbf{0} & \mathbf{0} \\
      \end{array}
    \right] \cdot
    \left[
      \begin{array}{c}
        \mathbf{\Theta}_\mathcal{S} \\
         \mathbf{Z}_{1,\mathcal{S}} \\
         \mathbf{Z}_{2,\mathcal{S}} \\
         \mathbf{W}_{1} \\
         \mathbf{W}_{2}
      \end{array}
    \right],
\end{equation}
where for an arbitrary vector or a matrix $\mathbf{A}$, the notation $\mathbf{A}_\mathcal{S}$ represents the elements of $\mathbf{A}$ indexed by the support $\mathcal{S}$. Recalling that $\mathbf{\Theta}_\mathcal{S}$, $\mathbf{Z}_{l,\mathcal{S}}$ and $\mathbf{W}_l$ are all independent Gaussian vectors, the vector on the left-hand-side of \eqref{eq:big matrix} is jointly Gaussian. Therefore, based on \cite[Theorem 10.2]{93:Kay}, we have
\begin{equation}
    \mathbb{E}\left[[\mathbf{\Theta}_\mathcal{S}^\top \hspace{0.1cm}        \mathbf{Z}_{1,\mathcal{S}}^\top \hspace{0.1cm}         \mathbf{Z}_{2,\mathcal{S}}^\top]^\top \hspace{0.1cm} \big| \mathbf{y} \right] = \mathbf{C}^\top \mathbf{D}^{-1} \mathbf{y},
\end{equation}
where $\mathbf{C}$ and $\mathbf{D}$ are covariance matrices specified in \eqref{eq:accessories1} and \eqref{eq:accessories2}, respectively. Now, since $\mathbb{E} [\mathbf{X}_{l,\mathcal{S}} | \mathbf{y}] = \mathbb{E}[\mathbf{\Theta}_\mathcal{S}|\mathbf{y}] + \mathbb{E}[\mathbf{Z}_{l,\mathcal{S}}|\mathbf{y}]$ ($l \in \{1,2\}$), it follows that within support set $\mathcal{S}$, we obtain
\begin{equation}
    \widetilde{\mathbf{x}}^\star(\mathbf{y},\mathcal{S}) \triangleq \left[\begin{array}{c}
    \widetilde{\mathbf{x}}_1^\star(\mathbf{y},\mathcal{S}) \\
    \widetilde{\mathbf{x}}_2^\star(\mathbf{y},\mathcal{S})\\
  \end{array}\right] =
     \left[\begin{array}{c  c  c}
    \mathbf{I} & \mathbf{I} & \mathbf{0}  \\
    \mathbf{I} & \mathbf{0} & \mathbf{I}\\
  \end{array}\right] \mathbf{C}^\top \mathbf{D}^{-1} \mathbf{y},
\end{equation}
and otherwise zero.

Now, it only remains to find an expression for $p(\mathcal{S} | \mathbf{y})$ in \eqref{eq:margin mmse tot}. Let us first define $\mathbf{q}_\mathcal{S} \triangleq [\mathbf{\theta}_\mathcal{S}^\top \hspace{0.1cm}  \mathbf{z}_{1,\mathcal{S}}^\top \hspace{0.1cm}  \mathbf{z}_{2,\mathcal{S}}^\top]^\top$, then
\begin{equation} \label{eq:bayes1}
\begin{aligned}
    p(\mathcal{S} | \mathbf{y}) &= \frac{p(\mathbf{y}| \mathcal{S}) p(\mathcal{S})}{ \sum_{\mathcal{S}} p(\mathbf{y}|\mathcal{S})p(\mathcal{S})}& \\
    &= \frac{\int_{\mathbf{q}_{\mathcal{S}}} p(\mathbf{q}_{\mathcal{S}}|\mathcal{S}) p(\mathbf{y}|\mathbf{q}_{\mathcal{S}},\mathcal{S}) d \mathbf{q}_{\mathcal{S}}}{\sum_{\mathcal{S}} \int_{\mathbf{q}_{\mathcal{S}}} p(\mathbf{q}_{\mathcal{S}}|\mathcal{S}) p(\mathbf{y}|\mathbf{q}_{\mathcal{S}},\mathcal{S}) d \mathbf{q}_{\mathcal{S}}},&
\end{aligned}
\end{equation}
where we used the fact that $p(\mathcal{S}) = \frac{1}{{N \choose K}}$. It can be verified that $p(\mathbf{q}_{\mathcal{S}}| \mathcal{S}) = \mathcal{N}(\mathbf{0}, \mathbf{E})$ and $p(\mathbf{y}|\mathbf{q}_{\mathcal{S}},\mathcal{S}) = \mathcal{N}(\mathbf{F q}_\mathcal{S},\mathbf{N})$, where the matrices $\mathbf{N}$, $\mathbf{E}$ and $\mathbf{F}$ are specified in \eqref{eq:accessories3}, \eqref{eq:accessories4} and \eqref{eq:accessories5}, respectively. Therefore, it follows that
\begin{equation} \label{eq:Gaussian distributions}
\begin{aligned}
    p(\mathbf{q}_{\mathcal{S}}|\mathcal{S}) &= \frac{1}{\sqrt{(2 \pi)^{3K} \det(\mathbf{E})}} e^{-\frac{1}{2} \mathbf{q}_\mathcal{S}^\top \mathbf{E}^{-1} \mathbf{q}_\mathcal{S}}&\\
    p(\mathbf{y}|\mathbf{q}_{\mathcal{S}},\mathcal{S}) &= \frac{1}{\sqrt{(2 \pi)^{2M} \det(\mathbf{N})}} e^{-\frac{1}{2} (\mathbf{y} - \mathbf{Fq}_\mathcal{S})^\top\mathbf{N}^{-1} (\mathbf{y} - \mathbf{Fq}_\mathcal{S})}.&
\end{aligned}
\end{equation}
Using the Gaussian distributions in \eqref{eq:Gaussian distributions}, we get
\begin{equation} \label{eq:integral_comp}
\begin{aligned}
    p(\mathbf{q}_{\mathcal{S}}|\mathcal{S}) p(\mathbf{y}|\mathbf{q}_{\mathcal{S}},\mathcal{S}) &\!=\! \frac{e^{- \mathbf{y}^\top\mathbf{N}^{-\!1} \mathbf{y}}}{\sqrt{(2 \pi)^{3K +\! 2M} \det(\mathbf{E}) \det(\mathbf{N})}}& \\
    &\times e^{-\frac{1}{2} \left(\mathbf{q}_\mathcal{S}^\top \left(\mathbf{E}^{-\!1} \!+\! \mathbf{F}^\top \mathbf{N}^{-\!1} \mathbf{F} \right)\mathbf{q}_\mathcal{S} \!-\! 2 \mathbf{y}^\top \mathbf{N}^{-\!1} \mathbf{F} \mathbf{q}_\mathcal{S}\right)}.&
\end{aligned}
\end{equation}
Now, using \cite[eq. (346)]{matrix_cookbook}, it yields
\begin{equation} \label{eq:Gaussian int}
\begin{aligned}
    &\int_{\mathbf{q}_{\mathcal{S}}} p(\mathbf{q}_{\mathcal{S}}|\mathcal{S}) p(\mathbf{y}|\mathbf{q}_{\mathcal{S}},\mathcal{S}) d \mathbf{q}_{\mathcal{S}} =&\\
    &\hspace{0.6cm}\frac{e^{\frac{1}{2}\mathbf{y}^\top \left( \mathbf{N}^{-1} \mathbf{F}^\top \left(\mathbf{E}^{-1} + \mathbf{F}^\top \mathbf{N}^{-1} \mathbf{F} \right)^{-1} \mathbf{F} \mathbf{N}^{-1} - \mathbf{N}^{-1}\right) \mathbf{y}}}{\sqrt{(2 \pi)^{2M-3K} \det(\mathbf{N}) \det(\mathbf{E}^{-1} + \mathbf{F}^\top \mathbf{N}^{-1} \mathbf{F})}}.&
\end{aligned}
\end{equation}
Plugging \eqref{eq:Gaussian int} back into \eqref{eq:bayes1} yields $\beta_\mathcal{S}$ in \eqref{eq:MMSE closed details2}, and the proof is concluded.  $\QED$

\section{Proof of \proref{pro:oracle bound}} \label{app B}

The oracle estimator is obtained by \eqref{eq:MMSE closed details1} given $\mathcal{S}^{(or)}$. Then, it follows from the law of total expectation that
\begin{equation} \label{eq:oracle mse}
    D_{cs}^{(or)} = \frac{1}{2K} \mathbb{E} \left[\sum_{l=1}^2 \mathbb{E}\left[\|\mathbf{X}_{l,\mathcal{S}^{(or)}} - \widetilde{\mathbf{X}}_{l,\mathcal{S}^{(or)}}\|_2^2 \right] \right],
\end{equation}
where the inner expectation is taken over the distribution of $\mathbf{X}_l$ given the oracle-known support, and the outer expectation is taken over all possibilities of oracle support set. Further, $\widetilde{\mathbf{X}}_{l,\mathcal{S}^{(or)}} \triangleq \mathbb{E}[\mathbf{X}_{l,\mathcal{S}^{(or)}}|\mathbf{Y}] \in \mathbb{R}^K$, $l \in \{1,2\}$.

Defining $\mathbf{Q}_\mathcal{S} \triangleq [\mathbf{\Theta}_\mathcal{S}^{\top} \hspace{0.1cm}  \mathbf{Z}_{1,\mathcal{S}}^{\top} \hspace{0.1cm}  \mathbf{Z}_{2,\mathcal{S}}^{\top}]^{\top}$, for $l\!=\!1$, we have
\begin{equation} \label{eq:cov MSE oracle1}
\begin{aligned}
    &\mathbb{E}\left[\left(\mathbf{X}_{1,\mathcal{S}^{(or)}} - \widetilde{\mathbf{X}}_{1,\mathcal{S}^{(or)}}\right) \left(\mathbf{X}_{1,\mathcal{S}^{(or)}} - \widetilde{\mathbf{X}}_{1,\mathcal{S}^{(or)}}\right)^\top \right]& \\
    &\!\stackrel{(a)}{=} \left[\mathbf{I} \hspace{0.15cm}  \mathbf{I} \hspace{0.15cm} \mathbf{0} \right] \mathbb{E}\left[\left(\mathbf{Q}_{\mathcal{S}^{(or)}} \!-\! \widetilde{\mathbf{Q}}_{\mathcal{S}^{(or)}}\right) \left(\mathbf{Q}_{\mathcal{S}^{(or)}} \!-\! \widetilde{\mathbf{Q}}_{\mathcal{S}^{(or)}}\right)^{\! \top} \right] \left[\mathbf{I} \hspace{0.15cm}  \mathbf{I} \hspace{0.15cm} \mathbf{0} \right]^\top& \\
    &\! \stackrel{(b)}{=} \left[\mathbf{I} \hspace{0.15cm}  \mathbf{I} \hspace{0.15cm} \mathbf{0} \right] \left(\mathbf{E} - \mathbf{C}^\top \mathbf{D}^{-1} \mathbf{C} \right) \left[\mathbf{I} \hspace{0.15cm}  \mathbf{I} \hspace{0.15cm} \mathbf{0} \right]^\top,&
\end{aligned}
\end{equation}
where $(a)$ follows from the fact that $\mathbf{X}_{1,\mathcal{S}} = \left[\mathbf{I} \hspace{0.15cm}  \mathbf{I} \hspace{0.15cm} \mathbf{0} \right] \mathbf{Q}_\mathcal{S}$, and $(b)$ can be shown from \cite[Theorem 10.2]{93:Kay}. Similarly, for $l=2$, we obtain
\begin{equation} \label{eq:cov MSE oracle2}
\begin{aligned}
    &\mathbb{E}\left[\left(\mathbf{X}_{2,\mathcal{S}^{(or)}} - \widetilde{\mathbf{X}}_{2,\mathcal{S}^{(or)}}\right) \left(\mathbf{X}_{2,\mathcal{S}^{(or)}} - \widetilde{\mathbf{X}}_{2,\mathcal{S}^{(or)}}\right)^\top \right]
     &\\
     &=\left[\mathbf{I} \hspace{0.15cm}  \mathbf{0} \hspace{0.15cm} \mathbf{I} \right] \left(\mathbf{E} - \mathbf{C}^\top \mathbf{D}^{-1} \mathbf{C} \right) \left[\mathbf{I} \hspace{0.15cm}  \mathbf{0} \hspace{0.15cm} \mathbf{I} \right]^\top.&
\end{aligned}
\end{equation}

Combining \eqref{eq:cov MSE oracle1} and \eqref{eq:cov MSE oracle2} with \eqref{eq:oracle mse}, it follows that
\begin{equation} \label{eq:final oracle proof}
\begin{aligned}
    D_{cs}^{(or)} & \stackrel{(a)}{=} \frac{1}{2K}\mathbb{E}\left[\text{Tr}\left\{
    \left[\begin{array}{c c c}
    2\mathbf{I} & \mathbf{I} & \mathbf{I} \\
    \mathbf{I} & \mathbf{I} & \mathbf{0}  \\
    \mathbf{I} & \mathbf{I} & \mathbf{I} \\
    \end{array}\right] \left(\mathbf{E} - \mathbf{C}^\top \mathbf{D}^{-1} \mathbf{C} \right) \right\} \right]& \\
    &\stackrel{(b)}{=} \! 1 \!-\! \frac{1}{2K}\left[\text{Tr}\left\{
    \left[\begin{array}{c c c}
    2\mathbf{I} & \mathbf{I} & \mathbf{I} \\
    \mathbf{I} & \mathbf{I} & \mathbf{0}  \\
    \mathbf{I} & \mathbf{I} & \mathbf{I} \\
    \end{array}\right]  \frac{1}{{N \choose K}}\sum_{\mathcal{S}^{(or)}} \mathbf{C}^{\!\top} \mathbf{D}^{-\!1} \mathbf{C}  \right\} \right]& \\
\end{aligned}
\end{equation}
where $(a)$ follows from fact that for two matrices $\mathbf{A}$ and $\mathbf{B}$ with appropriate dimensions, we have $\text{Tr}\{\mathbf{A + B}\} = \text{Tr}\{\mathbf{A}\} + \text{Tr}\{\mathbf{B}\}$ and $\text{Tr}\{\mathbf{AB}\} = \text{Tr}\{\mathbf{BA}\}$. Also, $(b)$ follows the uniform distribution of a possible oracle-known support set, and simple matrix algebra. $\QED$

\section{Proof of \proref{theo1-2}} \label{app C}

The proof follows the same line of arguments in the proof of \proref{theo1}. Let $\mathbf{q}_\mathcal{S} \triangleq [\mathbf{\theta_s}^\top \hspace{0.1cm}  \mathbf{z}_{1,\mathcal{S}}^\top \hspace{0.1cm}  \mathbf{z}_{2,\mathcal{S}}^\top]^\top$ and $\mathbf{g}_\mathcal{S} \triangleq [\mathbf{\theta_s}^\top \hspace{0.1cm}  \mathbf{z}_{1,\mathcal{S}}^\top]^\top$, then we rewrite $p(\mathbf{y}_2|\mathbf{y}_1)$ as
\begin{equation} \label{eq:proof cond1}
\begin{aligned}
    p(\mathbf{y}_2|\mathbf{y}_1) &= \frac{\sum_\mathcal{S} p(\mathbf{y}_1,\mathbf{y}_2 | \mathcal{S}) P(\mathcal{S})}{\sum_\mathcal{S} p(\mathbf{y}_1 | \mathcal{S}) P(\mathcal{S})}& \\
    &= \frac{\sum_\mathcal{S} \int_{\mathbf{q}_{\mathcal{S}}} p(\mathbf{q}_{\mathcal{S}}|\mathcal{S}) p(\mathbf{y}_1,\mathbf{y}_2|\mathbf{q}_{\mathcal{S}},\mathcal{S}) d \mathbf{q}_{\mathcal{S}}}{\sum_{\mathcal{S}} \int_{\mathbf{g}_{\mathcal{S}}} p(\mathbf{g}_{\mathcal{S}}|\mathcal{S}) p(\mathbf{y}_1|\mathbf{g}_{\mathcal{S}},\mathcal{S}) d \mathbf{g}_{\mathcal{S}}}.&
\end{aligned}
\end{equation}
It was shown in the proof of \proref{theo1} that $p(\mathbf{q}_\mathcal{S}|\mathcal{S})$ and $p(\mathbf{y}_1,\mathbf{y}_2 | \mathbf{q}_\mathcal{S},\mathcal{S})$ are Gaussian pdf's with known mean vectors and covariance matrices. Further, it can be easily shown that $p(\mathbf{g}_\mathcal{S}|\mathcal{S}) = \mathcal{N}(\mathbf{0},\text{bdiag}(\sigma_\theta^2 \mathbf{I},\sigma_z^2 \mathbf{I}))$, where $\text{bdiag}(\cdot)$ denotes the block diagonal matrix with diagonal matrices $\sigma_\theta^2 \mathbf{I}$ and $\sigma_z^2 \mathbf{I}$ as its blocks. Also, $p(\mathbf{y}_1|\mathbf{g}_\mathcal{S},\mathcal{S}) = \mathcal{N}(\mathbf{\Psi}\mathbf{g}_\mathcal{S},\sigma_{w_1}^2 \mathbf{I})$, where $\mathbf{\Psi} \triangleq [\mathbf{\Phi}_{1,\mathcal{S}} \hspace{0.15cm} \mathbf{\Phi}_{1,\mathcal{S}}]$. The integrations in the numerator and denominator of the last equation in \eqref{eq:proof cond1} can be analytically derived, similar to the ones in the proof of \proref{theo1}, leading to the expression in \eqref{cond prob}. $\QED$

\section{Proof of \theoref{theo2}} \label{app D}

We start with the decomposition of the end-to-end MSE in \eqref{eq:MSE decomp} as $D = D_{cs} + D_q$. Finding expressions for $D_q$ is non-trivial due to lack of analytical tractability and unknown probability distribution of sparse sources, and their MMSE reconstructions. Alternatively, we introduce two lower-bounds to $D$. The first relation is

\begin{equation} \label{eq:CS lb}
    D > D_{cs} \geq D_{cs}^{(or)}.
\end{equation}

Next, we note that the performance of the studied system is always poorer than that of a system where $\mathbf{X}_1$ and $\mathbf{X}_2$ are available for coding directly (with oracle known support set $\mathcal{S}^{(or)}$). Hence, we have
\begin{equation} \label{eq:quant dist der}
\begin{aligned}
    D \geq \frac{1}{2K} \sum_{l=1}^2 \mathbb{E}[\|\mathbf{X}_{l}|_{\mathcal{S}^{(or)}} - \widehat{\mathbf{X}}_{l}|_{\mathcal{S}^{(or)}}\|_2^2],
\end{aligned}
\end{equation}
where we denote by $\mathbf{X}_{l}|_{\mathcal{S}^{(or)}} \in \mathbb{R}^N$ the source $\mathbf{X}_{l}$ with oracle known support set $\mathcal{S}^{(or)}$, and $\widehat{\mathbf{X}}_{l}|_{\mathcal{S}^{(or)}} \in \mathbb{R}^N$ denotes decoded vector with known support. Since elements of support set are iid and uniformly drawn from all possibilities, a natural approach is to allocate $R_0 = \log_2 {N \choose K}$ bits to transmit $\mathcal{S}^{(or)}$ which is received without loss. Then, we only need to find the distortion-rate function for two correlated Gaussian sources using $R_1 + R_2 - \log_2 {N \choose K}$. Let us denote the non-sparse correlated Gaussian sources by $\mathbf{X}_{1,\mathcal{S}} , \mathbf{X}_{2,\mathcal{S}} \in \mathbb{R}^K$. The rate region for the quadratic Gaussian problem of two-terminal source coding has been developed in \cite{08:Wagner} so that we can lower-bound the last expression in \eqref{eq:quant dist der}. For this purpose, let us define $D_{l,\mathcal{S}^{(or)}} \triangleq \frac{1}{K}\mathbb{E}[\|\mathbf{X}_{l,\mathcal{S}^{(or)}} - \widehat{\mathbf{X}}_{l,\mathcal{S}^{(or)}}\|_2^2]$, $l \in \{1,2\}$, then with some mathematical simplifications of the results in \cite[Theorem 1]{08:Wagner}, we obtain
\begin{equation} \label{eq:quant dist Wagner}
\begin{aligned}
	D_{1,\mathcal{S}^{(or)}} D_{2,\mathcal{S}^{(or)}} &\geq  \left(1\!-\! \frac{\rho^2}{(1-\rho)^2}\right) 2^{\frac{-2\left(R_1 + R_2 -  \log_2 {N \choose K}\right)}{K}} &\\
& + \frac{\rho^2}{(1-\rho)^2} 2^{\frac{-4\left(R_1 + R_2 -  \log_2 {N \choose K}\right)}{K}}.&
\end{aligned}
\end{equation}
Since $D_{1,\mathcal{S}^{(or)}}$ is inversely proportional to $D_{2,\mathcal{S}^{(or)}}$, then $\frac{1}{2K}\sum_{l=1}^2 D_{l,\mathcal{S}^{(or)}}$ is minimized by setting $D_{1,\mathcal{S}^{(or)}} = D_{2,\mathcal{S}^{(or)}}$. Combining this fact with \eqref{eq:quant dist Wagner} and \eqref{eq:quant dist der}, it follows that
\begin{equation} \label{eq:quant dist total}
\begin{aligned}
	D  &\geq  \left(\left(1\!-\! \frac{\rho^2}{(1-\rho)^2}\right) 2^{\frac{-2\left(R_1 + R_2 -  \log_2 {N \choose K}\right)}{K}} \right.&\\
& \left. + \frac{\rho^2}{(1-\rho)^2} 2^{\frac{-4\left(R_1 + R_2 -  \log_2 {N \choose K}\right)}{K}} \right)^{\frac{1}{2}}
\triangleq D_q^{(or)}.&
\end{aligned}
\end{equation}

From the lower-bounds \eqref{eq:CS lb} and \eqref{eq:quant dist total}, it can be inferred that the former is tighter when CS measurements are noisy, and the latter is tighter when there is no loss due to CS distortion. Therefore, in order to adaptively consider both regimes, we develop a composite lower-bound by combining them as
\begin{equation}
    D > \max \left\{D_{cs}^{(or)},D_{q}^{(or)} \right\},
\end{equation}
which concludes the proof.

It can be also seen from \eqref{eq:total lb} that the channel aspects are not considered in developing the lower-bound. This is due to fact that the source-channel separation theorem is not optimal in the case of our studied distributed system, therefore, the minimum MSE (in terms of distortion-rate function over a DMC) cannot be analytically derived (based on channel capacity) in the scenario of noisy channels. As a result, when channel becomes very noisy, the lower-bound is not theoretically attainable. $\QED$

\bibliographystyle{IEEEtran}
\bibliography{IEEEfull,bibliokthPasha}
\end{document}